\RequirePackage{amsmath}

\documentclass{article}

\usepackage{amsthm}

\usepackage{xspace}
\usepackage[inline]{enumitem}
\usepackage{todonotes}

\usepackage{hyperref}

\newtheorem{theorem}{Theorem}
\newtheorem{lemma}{Lemma}

\binoppenalty=\maxdimen
\relpenalty=\maxdimen

\hyphenation{bi-con-nect-ed em-bed-ding em-bed-dings em-bed-ded bi-jec-tive-ly tri-con-nect-ed draw-ings}

\DeclareMathOperator{\skel}{skel}

\DeclareMathOperator{\corr}{corr}
\DeclareMathOperator{\apex}{apex}
\DeclareMathOperator{\spc}{space}
\DeclareMathOperator{\height}{height}

\newcommand{\ishape}{\textsf I~shape\xspace}
\newcommand{\eref}{e_{\mathrm{ref}}}
\newcommand{\muref}{\mu_{\mathrm{ref}}}

\title{An SPQR-Tree-Like Embedding Representation for Level Planarity}
\author{Guido Br\"uckner \and Ignaz Rutter}
\date{}

\begin{document}

\maketitle

\begin{abstract}
    An SPQR-tree is a data structure that efficiently represents all planar embeddings of a biconnected planar graph.
    It is a key tool in a number of constrained planarity testing algorithms, which seek a planar embedding of a graph subject to some given set of constraints.

    We develop an SPQR-tree-like data structure that represents all level-planar embeddings of a biconnected level graph with a single source, called the LP-tree, and give a simple algorithm to compute it in linear time.
    Moreover, we show that LP-trees can be used to adapt three constrained planarity algorithms to the level-planar case by using them as a drop-in replacement for SPQR-trees.
\end{abstract}

\section{Introduction}

Testing planarity of a graph and finding a planar embedding, if one exists, are classical algorithmic problems.
For visualization purposes, it is often desirable to draw a graph subject to certain additional constraints, e.g.,
finding orthogonal drawings~\cite{t-oeagitgwtmnob-87} or symmetric drawings~\cite{hme-altafcmssldotpg-06}, or inserting an edge into an embedding so that few edge crossings are caused~\cite{gmw-iaeiapg-05}.
Historically, these problems have been considered for embedded graphs.
More recent research has attempted to optimize not only one fixed embedding, but instead to optimize across all possible planar embeddings of a graph.
This includes
\begin{enumerate*}[label=(\roman*)]
    \item orthogonal drawings~\cite{brw-oogdwcdc-16},
    \item simultaneous embeddings, where one seeks to embed two planar graphs that share a common subgraph such that they induce the same embedding on the shared subgraph (see~\cite{bkr-sepg-13} for a survey),
    \item simultaneous orthogonal drawings~\cite{accddekklr-sop-16},
    \item embeddings where some edge intersections are allowed~\cite{ab-hpp-19},
    \item inserting an edge~\cite{gmw-iaeiapg-05}, a vertex~\cite{cgmw-iaviapg-09}, or multiple edges~\cite{ch-imeiapg-16} into an embedding,
    \item partial embeddings, where one insists that the embedding extends a given embedding of a subgraph~\cite{adfjk-tppeg-15}, and
    \item finding minimum-depth embeddings~\cite{adp-famdeoapgiot-11,bm-otcoepgtmcdm-90}.
\end{enumerate*}

The common tool in all of these recent algorithms is the SPQR-tree data structure, which efficiently represents all planar embeddings of a biconnected planar graph~$G$ by breaking down the complicated task of choosing a planar embedding of~$G$ into the task of independently choosing a planar embedding for each triconnected component of~$G$~\cite{dbt-ipt-89,dbt-olgawst-90,dt-omtcs-96,ht-dagitc-73,ml-ascopcg-37,t-cig-66}.
This is a much simpler task since the triconnected components have a very restricted structure, and so the components offer only basic, well-structured choices.

An \emph{upward planar drawing} is a planar drawing where each edge is represented by a~$y$-monotone curve.
For a level graph~$G=(V,E)$, which is a directed graph where each vertex~$v \in V$ is assigned to a level~$\ell(v)$ such that for each edge~$(u, v) \in E$ it is~$\ell(u) < \ell(v)$, a \emph{level-planar drawing} is an upward planar drawing where each vertex~$v$ is mapped to a point on the horizontal line~$y = \ell(v)$.
Level planarity can be tested in linear time~\cite{fpss-htmd-11,jl-lpelt-02,jlm-lptilt-98,rsbhk-sfplg-01}.
Recently, the problem of extending partial embeddings for level-planar drawings has been studied~\cite{br-pclp-17}.
While the problem is \textsf{NP}-hard in general, it can be solved in polynomial time for single-source graphs.
Very recently, an SPQR-tree-like embedding representation for upward planarity has been used to extend partial upward embeddings~\cite{bhr-astlerfup-19}.
The construction crucially relies on an existing decomposition result for upward planar graphs~\cite{hl-updossad-96}.
No such result exists for level-planar graphs.
Moreover, the level assignment leads to components of different ``heights'', which makes our decompositions significantly more involved.

\paragraph{Contribution.}
We develop the LP-tree, an analogue of SPQR-trees for level-planar embeddings of level graphs with a single source whose underlying undirected graph is biconnected.
It represents the choice of a level-planar embedding of a level-planar graph by individual embedding choices for certain components of the graph, for each of which the embedding is either unique up to reflection, or allows to arbitrarily permute certain subgraphs around two pole vertices.
Its size is linear in the size of~$G$ and it can be computed in linear time.
The LP-tree is a useful tool that unlocks the large amount of SPQR-tree-based algorithmic knowledge for easy translation to the level-planar setting.
In particular, we obtain linear-time algorithms for partial and constrained level planarity for biconnected single-source level graphs, which improves upon the~$O(n^2)$-time algorithm known to date~\cite{br-pclp-17}.
Further, we describe the first efficient algorithm for the simultaneous level planarity problem when the shared graph is a biconnected single-source level graph.


\section{Preliminaries}
\label{sec:preliminaries}

\addtocontents{toc}{\protect\setcounter{tocdepth}{1}}

Let~$G = (V, E)$ be a connected level graph.
For each vertex~$v \in V$ let~$d(v) \ge \ell(v)$ denote the \emph{demand} of~$v$.
An \emph{apex} of some vertex set~$V' \subseteq V$ is a vertex~$v \in V'$ whose level is maximum.
The \emph{demand} of~$V'$, denoted by~$d(V')$, is the maximum demand of a vertex in~$V'$.
An apex of a face~$f$ is an apex of the vertices incident to~$f$.
A \emph{planar drawing} of~$G$ is a topological planar drawing of the underlying undirected graph of~$G$.
Planar drawings are \emph{equivalent} if they can be continuously transformed into each other without creating intermediate intersections.
A \emph{planar embedding} is an equivalence class of equivalent planar drawings.
\paragraph{Level Graphs and Level-Planar Embeddings.}
A \emph{path} is a sequence of vertices~$(v_1, v_2, \ldots, v_j)$ so that for~$1 \leq i < j$ either~$(v_i, v_{i + 1})$ or~$(v_{i + 1}, v_i)$ is an edge in~$E$.
A \emph{directed path} is a sequence~$(v_1, v_2, \ldots, v_j)$ of vertices so that for~$1 \leq i < j$ it is~$(v_i, v_{i + 1}) \in E$.
A vertex~$u$ \emph{dominates} a vertex~$v$ if there exists a directed path from~$u$ to~$v$.
A vertex is a \emph{sink} if it dominates no vertex except for itself.
A vertex is a \emph{source} if it is dominated by no vertex except for itself.
An \emph{$st$-graph} is a graph with a single source and a single sink, usually denoted by~$s$ and~$t$, respectively.
Throughout this paper all graphs are assumed to have a single source~$s$.
%
For the remainder of this paper we restrict our considerations to level-planar drawings of~$G$ where each vertex~$v \in V$ that is not incident to the outer face is incident to some inner face~$f$ so that each apex~$a$ of the set of vertices on the boundary of~$f$ satisfies~$d(v) < \ell(a)$.
We will use demands in Section~\ref{sec:applications} to restrict the admissible embeddings of biconnected components in the presence of cutvertices.
Note that setting~$d(v) = \ell(v)$ for each~$v \in V$ gives the conventional definition of level-planar drawings.
A planar embedding~$\Gamma$ of~$G$ is \emph{level planar} if there exists a level-planar drawing of~$G$ with planar embedding~$\Gamma$.
We then call~$\Gamma$ a \emph{level-planar embedding}.
For single-source level graphs, level-planar embeddings are equivalence classes of topologically equivalent level-planar drawings.

\begin{lemma}
    The level-planar drawings of a single-source level graph correspond bijectively to its level-planar combinatorial embeddings.
    \label{lem:lp-embedding-is-equivalence-class}
\end{lemma}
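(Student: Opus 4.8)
The plan is to show that the natural map $\phi$, which sends a level-planar drawing to its combinatorial embedding (the rotation system together with the choice of outer face), becomes a bijection once drawings are identified exactly when they induce the same left-to-right vertex order on every level. Concretely, for every level $i$ a level-planar drawing determines an order of the vertices placed on the line $y=i$; I call the resulting family the \emph{level order} of the drawing. Surjectivity of $\phi$ onto the level-planar combinatorial embeddings holds by definition, and the easy half is that two drawings with the same level order are related by a continuous deformation through level-planar drawings: one interpolates the positions on each line while preserving the order and keeps every edge $y$-monotone, so that in particular such drawings share a combinatorial embedding. Hence the whole statement reduces to the converse, namely that a level-planar combinatorial embedding determines the level order uniquely, and this is precisely where the single-source hypothesis enters.

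To recover the level order from the embedding I would first invoke the bimodality of level-planar drawings: at each vertex the incoming edges appear consecutively in the rotation, and so do the outgoing edges, so the embedding fixes the left-to-right order of the in-edges and of the out-edges at every vertex. I then exploit the single source. Every vertex other than $s$ has an incoming edge, so I may select its leftmost incoming edge as a parent pointer; following these pointers strictly decreases the level and therefore terminates at the unique source $s$. The pointers thus form a tree $T$ rooted at $s$ that depends only on the combinatorial embedding, not on the particular drawing.

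Given two vertices $u,w$ on the same level, let $x$ be their lowest common ancestor in $T$. The two tree paths coincide up to $x$ and then continue toward $u$ and toward $w$ along two distinct outgoing edges of $x$ whose left-to-right order is dictated by the embedding. I would argue that the order of $u$ and $w$ on their common level equals this local order at $x$: the two tree paths are $y$-monotone and meet only at $x$, hence cannot cross in a planar drawing, so they separate the plane with $u$ and $w$ lying on opposite sides in the order prescribed at $x$. Since every step of this determination reads only the embedding, any two level-planar drawings with the same combinatorial embedding induce the same level order, which yields injectivity and completes the bijection.

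I expect the propagation step to be the main obstacle: turning the local edge order at the lowest common ancestor into the global order of the far endpoints requires a careful noncrossing-and-monotonicity argument for the two root-to-vertex paths of $T$, together with handling the degenerate case in which one path is an initial segment of the other and the case in which further edges of $G$ run between the two paths. The single-source hypothesis is indispensable throughout: with several sources the parent pointers form a forest with several roots, two vertices in different trees have no common ancestor at which to compare, and one can then exhibit level-planar drawings that share a combinatorial embedding but realize different level orders, so the correspondence genuinely breaks down.
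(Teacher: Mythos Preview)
Your approach is essentially the same as the paper's: both recover the left-to-right order of two same-level vertices from the rotation at a common ancestor reached by disjoint directed paths, which exists precisely because the graph has a single source. The paper's proof is terser---it assumes the graph is proper, picks any such ancestor rather than building a leftmost-in-edge tree and taking an LCA, and uses an incoming edge at the ancestor (or the edge $(s,t)$ when the ancestor is $s$) as the reference for reading off the order, while omitting the non-crossing propagation argument that you correctly identify as the step requiring care.
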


\begin{proof}
    Let~$G = (V, E)$ be a single-source~$k$-level graph.
    Assume without loss of generality that~$G = (V, E)$ is proper, i.e., for each edge~$(u, v) \in E$ it is~$\ell(u) + 1 = \ell(v)$.
    Let~$u, v \in V_i$ be two vertices on level~$i$ with~$1 \leq i \leq k$.
    Further, let~$w$ be a vertex of~$G$ so that there are disjoint directed paths~$p_u$ and~$p_v$ from~$w$ to~$u$ and~$v$, respectively.
    Because~$G$ is a single-source graph, such a vertex must exist.
    Let~$e$ and~$f$ denote the first edge on~$p_u$ and~$p_v$, respectively.
    Further, let~$\prec$ be a level-planar drawing of~$G$ and let~$\mathcal G$ be a level-planar combinatorial embedding of~$G$.
    If~$w$ is not the single source of~$G$, it has an incoming edge~$g$.
    Then it is~$u \prec_i v$ if and only if~$e, f$ and~$g$ appear in that order around~$w$.
    Otherwise, if~$w$ is the source of~$G$, let~$g$ denote the edge~$(w, t)$, which exists by construction.
    Because~$g$ is embedded as the leftmost edge, it is~$u \prec_i v$ if and only if~$g, e$ and~$f$ appear in that order around~$w$.
    The claim then follows easily.
\end{proof}
To make some of the subsequent arguments easier to follow, we preprocess our input level graph~$G$ on~$k$ levels to a level graph~$G'$ on~$d(V) + 1$ levels as follows.
We obtain~$G'$ from~$G$ by adding a new vertex~$t$ on level~$d(V) + 1$ with demand~$d(t) = d(V) + 1$, connecting it to all vertices on level~$k$ and adding the edge~$(s, t)$.
Note that~$G'$ is generally not an~$st$-graph.
Let~$H$ be a graph with a level-planar embedding~$\Lambda$ and let~$H'$ be a supergraph of~$H$ with a level-planar embedding~$\Lambda'$.
The embedding~$\Lambda'$ \emph{extends}~$\Lambda$ when~$\Lambda'$ and~$\Lambda$ coincide on~$H$.
The embeddings of~$G'$ where the edge~$(s, t)$ is incident to the outer face and the embeddings of~$G$ are, in a sense, equivalent.

\begin{lemma}
    An embedding~$\Gamma$ of~$G$ is level-planar if and only if there exists a level-planar embedding~$\Gamma'$ of~$G'$ that extends~$\Gamma$ where~$(s, t)$ is incident to the outer face.
    \label{lem:t-extension}
\end{lemma}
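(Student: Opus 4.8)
The plan is to prove both directions by converting between drawings of $G$ and of $G'$, relying on the fact that the source $s$ and every level-$k$ vertex necessarily lie on the outer face of any level-planar drawing. Since $G$ has a single source, $s$ is the unique vertex on the lowest level $\ell(s)=\ell_{\min}$ (any other lowest-level vertex would have no incoming edge and thus be a second source), so the open half-plane below $y=\ell_{\min}$ is free of the drawing and places $s$ on the outer face; symmetrically, the half-plane above $y=k$ is empty, so each level-$k$ vertex is on the outer face. Note that the demand condition forces the same conclusion combinatorially for the level-$k$ vertices, as no face of $G$ can have an apex of level exceeding $k$.

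For the forward direction I would start from a level-planar drawing $D$ of $G$ realizing $\Gamma$ and augment it to a drawing $D'$ of $G'$. I place $t$ on level $d(V)+1$ above the whole drawing and connect it to the level-$k$ vertices $v_1,\dots,v_m$, in the left-to-right order in which they appear along the upper arc of the outer face, by a planar fan of $y$-monotone edges routed through the empty region $y>k$. Finally I route $(s,t)$ as a $y$-monotone curve through the unbounded left part of the outer face, which spans every level from $\ell_{\min}$ to $d(V)+1$ and carries both $s$ and $t$ on its boundary; this places $(s,t)$ on the outer face. Since all additions happen in the outer region of $D$, the embedding induced on $G$ is unchanged, so $\Gamma'$ extends $\Gamma$.

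It then remains to verify that $D'$ satisfies the demand condition. The inner faces of $D$ are untouched, so every vertex that witnessed the condition inside $D$ still does. The only newly enclosed vertices lie on faces created by the fan or by $(s,t)$; each such face has $t$ on its boundary and hence apex $t$ of level $d(V)+1 > d(v)$ for every $v\in V$, so the condition holds trivially, while vertices remaining on the outer face are exempt. This is where the two genuine obstacles sit: checking that the fan together with $(s,t)$ can be drawn without crossings while keeping every edge $y$-monotone, and confirming the demand bookkeeping. The latter is clean precisely because $t$ is a universal high apex, so the geometric routing of $(s,t)$ through the left part of the outer face is the harder point; I expect it to be the main obstacle, and I would justify it by arguing that the left-unbounded portion of the outer face is connected and meets every level between the global extrema.

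For the backward direction I take a level-planar drawing $D'$ of $G'$ realizing $\Gamma'$ with $(s,t)$ on the outer face, and delete $t$ together with its incident edges to obtain a drawing $D$ of $G$. Planarity and the level property are preserved, and since $\Gamma'$ extends $\Gamma$ the induced embedding is exactly $\Gamma$. Because $t$ lay on the outer face, all faces incident to $t$ merge into the outer face of $D$ upon deletion. For the demand condition, consider a vertex $v$ on an inner face of $D$; then $v$ touches neither the outer face of $D'$ nor any $t$-incident face, so its witnessing inner face in $D'$ avoids $t$, survives unchanged in $D$ as an inner face with the same apex, and still witnesses the condition. Hence $D$ is a valid level-planar drawing and $\Gamma$ is level planar. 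The only care needed here is tracking this face merge, the mild counterpart of the routing difficulty in the forward direction.
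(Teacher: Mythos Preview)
Your proposal is correct and follows essentially the same approach as the paper: for the backward direction restrict~$\Gamma'$ to~$G$, and for the forward direction place~$t$ above the drawing, attach it to the level-$k$ vertices through the empty upper half-plane, and route~$(s,t)$ through the outer face using that~$s$ lies on the lowest level. Your treatment is in fact more careful than the paper's own proof, which does not explicitly verify the demand condition after the augmentation or the restriction; your observation that every newly created inner face contains~$t$ (and hence has apex on level~$d(V)+1$) and your face-merge argument for the backward direction fill exactly the gaps the paper leaves implicit.
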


\begin{proof}
    Let~$G = (V, E)$ be a~$k$-level graph, and let~$G'$ be the supergraph of~$G$ as described above together with a level-planar embedding~$\Gamma'$.
    Because~$G$ is a subgraph of~$G'$, restricting~$\Gamma'$ to~$G$ immediately gives a level-planar embedding~$\Gamma$ of~$G$ that is extended by~$\Gamma'$.

    Now let~$\Gamma$ be a level-planar embedding of~$G$.
    Since all apices of~$V$ lie on the outer face, the newly added vertex~$t$ can be connected to those vertices without causing any edge crossings.
    Then, because~$s$ is the single source of~$G$ and~$t$ is the sole apex of~$V(G')$, the edge~$(s, t)$ can be drawn into the outer face as a~$y$-monotone curve without causing edge crossings.
    Let~$\Gamma'$ refer to the resulting embedding.
    Then~$\Gamma'$ is a level-planar embedding of~$G'$ that extends~$\Gamma$.
\end{proof}

To represent all level-planar embeddings of~$G$, it is sufficient to represent all level-planar embeddings of~$G'$ and remove~$t$ and its incident edges from all embeddings.
It is easily observed that if~$G$ is a biconnected single-source graph, then so is~$G'$.
We assume from now on that the vertex set of our input graph~$G$ has a unique apex~$t$ and that~$G$ contains the edge~$(s, t)$.
We still refer to the highest level as level~$k$, i.e., the apex~$t$ lies on level~$k$.

Level-planar embeddings of a graph have an important relationship with level-planar embeddings of~$st$-supergraphs thereof.
We use Lemmas~\ref{lem:st-augmentation} and~\ref{lem:st-lp-is-p}, and a novel characterization of single-source level planarity in Lemma~\ref{lem:level-planarity-characterisation} to prove that certain planar embeddings are also level planar.

\begin{lemma}
    Let~$G = (V, E)$ be a single-source level graph with a unique apex.
    Further, let~$\Gamma$ be a level-planar embedding of $G$.
    Then there exists an~$st$-graph~$G_{st} = (V, E \cup E_{st})$ together with a level-planar embedding~$\Gamma_{st}$ that extends~$\Gamma$.
    \label{lem:st-augmentation}
\end{lemma}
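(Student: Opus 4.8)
The plan is to mimic the classical augmentation of an upward-planar embedding to a planar $st$-graph, but to carry out the additions so that they respect the level assignment. Recall that after the preprocessing $G$ has the single source $s$ on the lowest level and the unique apex $t$ on level $k$; since $t$ has no higher neighbour it is a sink, and it lies on the outer face. We want $G_{st}$ to have $s$ as its only source and $t$ as its only sink. I would add only edges directed from a lower to a higher vertex. Such additions can never create a new source (they only add incoming edges at the upper endpoint, and never at $s$, which stays the unique lowest-level vertex), and they never destroy the sink $t$ (nothing lies above $t$, so no edge leaves $t$). Hence it suffices to add, one at a time, an augmenting edge for every sink $v \neq t$ so that $v$ ceases to be a sink, all the while preserving level planarity and the restriction to $\Gamma$.

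The core step is to exhibit the augmenting edge for a fixed sink $v \neq t$. First I would fix a level-planar drawing realizing $\Gamma$. Because $v$ is a sink, all edges incident to $v$ descend from $v$, so in the rotation at $v$ there is a unique face $f$ lying \emph{immediately above} $v$ (the face bounded by the leftmost and rightmost edges at $v$), and $v$ is a local maximum of $\partial f$. I would then argue that $f$ contains a vertex strictly above $v$, i.e.\ that its apex $a_f$ satisfies $\ell(v) < \ell(a_f)$: for an internal face this holds because $f$ occupies area directly above $v$, forcing $\partial f$ to rise above level $\ell(v)$; for the outer face it holds because the global apex $t$ lies on the outer boundary and $\ell(v) < \ell(t)$. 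Finally I would pick a suitable vertex $w \in \partial f$ with $\ell(w) > \ell(v)$ that is reachable from $v$ by a $y$-monotone curve drawn inside $f$ without crossing $\partial f$ — concretely, the first boundary vertex met by the upward vertical ray from $v$, or the lowest boundary vertex above $v$ visible from $v$ — and add the edge $(v,w)$ to $E_{st}$, routing it inside $f$.

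Correctness and termination then follow quickly. The new edge is $y$-monotone and lies in the interior of the single face $f$, so the augmented drawing is still level planar and, since we only added edges, the resulting embedding restricts to $\Gamma$ on $G$. Adding $(v,w)$ equips $v$ with an outgoing edge, so $v$ is no longer a sink, while it gives $w$ only an additional incoming edge, which changes neither the sink status of $w$ nor the source status of any vertex; in particular $s$ remains the unique source and no new sink is created. Iterating over the finitely many sinks different from $t$ — re-fixing the drawing after each insertion — therefore reduces the number of sinks to exactly one, leaving $t$ as the unique sink. The graph $G_{st} = (V, E \cup E_{st})$ is then an $st$-graph with source $s$ and sink $t$, and the final drawing witnesses a level-planar embedding $\Gamma_{st}$ extending $\Gamma$.

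The main obstacle is the geometric heart of the second paragraph: formalizing the right notion of \emph{upward visibility} inside the face above a sink so that the augmenting edge exists, is $y$-monotone, crosses no boundary, and goes to a strictly higher vertex. The internal-face case is the clean one; the outer-face case needs extra care, since there one must follow the outer boundary upward to a higher vertex (ultimately connecting, possibly through repeated applications, toward the apex $t$). A second subtlety is the interaction of several local maxima sharing one face: the several augmenting edges must not cross one another and must indeed eliminate all of them. I expect this to be dischargeable by processing the relevant vertices in a consistent order (for instance by decreasing level, so that later edges nest inside earlier ones), which keeps the additions planar and $y$-monotone simultaneously.
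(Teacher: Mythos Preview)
Your proposal is correct and is essentially the paper's own argument: the paper too inducts on the number of sinks, picks a sink $w\neq t$, shoots a vertical ray upward from $w$ inside the face above it, and adds the edge $(w,v)$ to the first vertex or edge-endpoint $v$ encountered (or $(w,t)$ if the ray escapes to the outer face). The two worries you flag at the end dissolve under this one-sink-at-a-time scheme: the vertical ray gives the $y$-monotone augmenting curve directly, and since each insertion splits the relevant face before the next sink is processed, no two augmenting edges can ever conflict, so no ordering argument is needed.
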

\begin{proof}
    We prove the claim by induction over the number of sinks in~$G$.
    Note that because~$t$ is an apex of~$G$, it must be a sink.
    So~$G$ has at least one sink.
    If~$G$ has one sink, the claim is trivially true for~$E_{st} = \emptyset$.
    Now suppose that~$G$ has more than one sink.
    Let~$w \neq t$ be a sink of~$G$.
    In some drawing of~$G$ with embedding~$\Gamma$, walk up vertically from~$w$ into the incident face above~$w$.
    If a vertex~$v$ or an edge~$(u, v)$ is encountered, set~$E_{st} = \{(w, v)\}$.
    If no vertex or edge is encountered,~$w$ lies on the outer face of~$\Gamma$.
    Then set~$E_{st} = \{(w, t)\}$.
    Note that in both cases the added edges can be embedded into~$\Gamma$ as~$y$-monotone curves while maintaining level planarity.
    Then extend~$E_{st}$ inductively, which shows the claim.
\end{proof}

Next we establish a characterization of the planar embeddings that are level planar.
The following lemma is implicit in the planarity test for~$st$-graphs by Chiba~\cite{cnao-alafepgup-85} and the work on upward planarity by Di Battista and Tamassia~\cite{dt-afproad-88}.
\begin{lemma}
    Let~$G$ be an~$st$-graph.
    Then each planar embedding~$\Gamma$ of~$G$ is also a level-planar embedding of~$G$ in which~$(s, t)$ is incident to the outer face, and vice versa.
    \label{lem:st-lp-is-p}
\end{lemma}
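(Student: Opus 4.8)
The plan is to treat the two implications separately. The direction ``level-planar $\Rightarrow$ planar'' is immediate: by definition every level-planar drawing is a planar drawing, so every level-planar embedding in which $(s,t)$ lies on the outer face is in particular a planar embedding, and nothing further is required. All of the content lies in the forward direction, where a planar embedding~$\Gamma$ must be realized as a level-planar embedding with $(s,t)$ incident to the outer face.

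For that direction I would first fix the outer face. Since~$G$ is an~$st$-graph containing the edge~$(s,t)$, both faces incident to~$(s,t)$ have~$s$ and~$t$ on their boundary, so choosing either of them as the outer face of~$\Gamma$ places~$s$ and~$t$ on the outer face with~$(s,t)$ incident to it. With this choice I would invoke the theory of upward planarity: by the results of Di Battista and Tamassia~\cite{dt-afproad-88} (and, equivalently, the planarity test of Chiba et al.~\cite{cnao-alafepgup-85}), any planar embedding of an~$st$-graph with~$s$ and~$t$ on the outer face is an upward planar embedding, and hence there is an upward planar drawing~$\delta$ realizing~$\Gamma$. Because~$G$ has the unique source~$s$ and unique sink~$t$, and every vertex is reachable from~$s$ and reaches~$t$, in any upward drawing~$s$ is the unique lowest and~$t$ the unique highest vertex; thus~$(s,t)$ can be routed along the outer boundary.

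It then remains to turn the upward planar drawing~$\delta$ into a level-planar drawing in which every vertex~$v$ lies on the line~$y = \ell(v)$. Both the~$y$-coordinates used by~$\delta$ and the level function~$\ell$ are height assignments that strictly increase along every edge. The key point, which is where I expect the main difficulty to lie, is that replacing the heights of~$\delta$ by the prescribed levels~$\ell$ changes neither the rotation system nor the outer face and introduces no crossings. I would establish this by continuously deforming the~$y$-coordinate of each vertex~$v$ from its value in~$\delta$ to~$\ell(v)$ while keeping the embedding and the chosen outer face fixed: since every intermediate height assignment is still strictly increasing along each edge, every intermediate drawing can be realized with~$y$-monotone edges and the same upward planar embedding, so no crossing is ever created and the combinatorial embedding is preserved throughout. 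The endpoint of this deformation is a level-planar drawing realizing~$\Gamma$ with~$(s,t)$ on the outer face, which shows that~$\Gamma$ is a level-planar embedding of the required form and completes the proof.
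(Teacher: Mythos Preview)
Your overall strategy---reduce to the Di Battista--Tamassia upward planarity theorem and then adjust the $y$-coordinates to match the level assignment---is a legitimate alternative to the paper's self-contained construction. The paper instead proves bimodality of~$\Gamma$ directly and then builds the level-planar drawing face by face via a ``right frontier'' sweep, so your route is genuinely different and shorter, at the cost of relying on an external result.

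However, the deformation step as you have written it is circular. You want to show that the planar embedding~$\Gamma$ can be realized by a $y$-monotone drawing whose $y$-coordinates are exactly the prescribed levels~$\ell$. Your argument is: interpolate the heights from those of~$\delta$ to~$\ell$, and at each intermediate stage ``every intermediate drawing can be realized with $y$-monotone edges and the same upward planar embedding, so no crossing is ever created.'' But the claim that each intermediate height assignment \emph{can be realized} by a $y$-monotone planar drawing with embedding~$\Gamma$ is precisely the statement you are trying to prove (applied to a different height function). Nothing in the deformation explains why the realization exists at time~$t$ if it exists at time~$0$; you are not deforming an actual drawing, only the target heights, and if you did try to move vertices in a fixed drawing you would have to argue separately that no crossings appear.

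The clean fix is to drop the deformation entirely and cite the stronger form of the Di Battista--Tamassia result: for a planar $st$-graph with~$s$ and~$t$ on the outer face, every planar embedding admits an upward planar (even straight-line) drawing in which the $y$-coordinates are \emph{any} prescribed topological numbering of the vertices. Since~$\ell$ is such a numbering, this yields the desired level-planar drawing realizing~$\Gamma$ in one step. If you prefer not to cite that version, you would need to supply an actual construction---and that is essentially what the paper's right-frontier argument does.
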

\begin{proof}
    Consider a vertex~$v \neq s, t$ of~$G$.
    Then the incoming and outgoing edges appear consecutively around~$v$ in~$\Gamma$.
    To see this, suppose that there are four vertices~$w, x, y, z \in V$ with edges~$(w, v), (v, x), (y, v), (v, z) \in E$ that appear in that counter-clockwise cyclic order around~$v$ in~$\Gamma$.
    Because~$G$ is an~$st$-graph there are directed paths~$p_w$ and~$p_y$ from~$s$ to~$w$ and~$y$, respectively, and directed paths~$p_x$ and~$p_z$ from~$x$ and~$z$ to~$t$, respectively.
    Moreover,~$p \in \{ p_w, p_y \}$ and~$p' \in \{ p_x, p_z \}$ are disjoint and do not contain~$v$.
    Then some~$p \in \{ p_w, p_y \}$ and~$p' \in \{ p_x, p_z \}$ must intersect, a contradiction to the fact that~$\Gamma$ is planar; see~Fig.~\ref{fig:st-lp-is-p}~(a).

    Let~$e_1, e_2, \dots, e_i, e_{i + 1}, \dots, e_n$ denote the counter-clockwise cyclic order of edges around~$v$ in~$\Gamma$ so that~$e_1, \dots, e_i$ are incoming edges and~$e_{i + 1}, \dots, e_n$ are outgoing edges.
    Let~$e_1, \dots, e_i$ denote the left-to-right order of incoming edges and let~$e_n, e_{n - 1}, \dots, e_{i + 1}$ denote the left-to-right order of outgoing edges.
    Split the clockwise cyclic order of edges around~$s$ at~$(s, t)$ to obtain the left-to-right order of outgoing edges.
    Symmetrically, split counter-clockwise order of edges around~$t$ at~$(s, t)$ to obtain the left-to-right order of incoming edges.

    Create a level-planar embedding~$\Gamma'$ of~$G$ step by step as follows; see Fig.~\ref{fig:st-lp-is-p}.
    Draw vertices~$s$ and~$t$ on levels~$\ell(s)$ and~$\ell(t)$, respectively, and connect them by a straight line segment.
    Call the vertices~$s, t$ and the edge~$(s, t)$ \emph{discovered}.
    Call the path~$s, t$ the \emph{right frontier}.
    Call a vertex on the right frontier \emph{settled} if all of its outgoing edges are discovered.

    More generally, let~$s = u_1, u_2, \dots, u_n = t$ denote the right frontier.
    Modify the right frontier while maintaining that
    \begin{enumerate*}[label=(\roman*)]
        \item the right frontier is a directed path from~$s$ to~$t$,
        \item any edge~$(u_a, u_{a + 1})$ on the right frontier is the rightmost discovered outgoing edge around~$u_a$, and
        \item the right frontier is incident to the outer face of~$\Gamma'$.
    \end{enumerate*}

    Let~$u_i$ denote the vertex on the right frontier closest to~$t$ that is not settled.
    Discover the leftmost undiscovered outgoing edges starting from~$u_i$ to construct a directed path~$v_1 = u_i, v_2, \dots, v_m$, where~$v_m$ is the first vertex that had been discovered before.
    Because~$G$ has a single sink such a vertex exists.
    Because~$\Gamma$ is planar~$v_m$ lies on the right frontier, i.e.,~$v_m = u_j$ for some~$j$ with~$i < j \le n$.
    Insert the vertices~$v_2, \dots, v_{m - 1}$ and the edges~$(v_a, v_{a + 1})$ for~$1 \le a < m$ to the right of the path~$u_i, \dots, u_j$ into~$\Gamma'$ (Property~(iii) of the invariant), maintaining level planarity of~$\Gamma'$.
    This creates a new face~$f$ of~$\Gamma'$ whose boundary is~$u_i, u_{i + 1}, \dots, u_j = v_m, v_{m - 1}, \dots, v_1 = u_i$.

    We show that~$f$ is a face of~$\Gamma$.
    Because~$u_a$ is settled there cannot be an undiscovered outgoing edge between~$(u_{a - 1}, u_a)$ and~$(u_a, u_{a + 1})$ in the counter-clockwise order of edges around~$u_a$ in~$\Gamma$ for~$i < a < j$ (see edge~$g$ in Fig.~\ref{fig:st-lp-is-p}~(b)).
    There can also not be a discovered outgoing edge because of Property~(ii) of the invariant (see edge~$e$ in Fig.~\ref{fig:st-lp-is-p}~(b)).
    Because the leftmost undiscovered edge is chosen there is no undiscovered outgoing edge between~$(v_a, v_{a + 1})$ and~$(v_{a - 1}, v_a)$ in the counter-clockwise order of edges around~$v_a$ in~$\Gamma$ for~$1 < a < m$ (see edge~$h$ in Fig.~\ref{fig:st-lp-is-p}~(b)).
    There can also not be a discovered outgoing edge because~$v_a$ was not discovered before (see edge~$q$ in Fig.~\ref{fig:st-lp-is-p}~(b)).
    There can be no outgoing edge between~$(v_1, v_2)$ and~$(u_i, u_{i + 1})$ in the counter-clockwise order of edges around~$v_1 = u_i$ because either such an edge would be discovered contradicting Property~(ii), or not, contradicting the fact that~$(v_1, v_2)$ is chosen as the leftmost undiscovered outgoing edge of~$v_1$.
    There can be no outgoing edge between~$(u_{j - 1}, u_j)$ and~$(v_{m - 1}, v_m)$ in the counter-clockwise order of edges around~$u_j = v_m$ because either~$u_j = v_m = t$ is a sink, or the incoming and outgoing edges appear consecutively around~$u_j = v_m$ in~$\Gamma$ (see edge~$d$ in Fig.~\ref{fig:st-lp-is-p}~(b)).

    There can also be no incoming edge~$(u, v)$ between any of these edge pairs (see edge~$r$ in Fig.~\ref{fig:st-lp-is-p}~(b)).
    This is because~$G$ has a single source~$s$, so there exists a directed path~$p$ from~$s$ to~$u$.
    Because~$u$ lies inside of~$f$ the path~$p$ must contain a vertex~$x$ on the boundary of~$f$.
    Then~$p$ would also contain an outgoing edge of~$x$ which we have just shown to be impossible.

    Let~$s = u_1, u_2, \dots, u_i = v_1, v_2, \dots, v_m = u_j, \dots, u_n = t$ denote the new right frontier.
    Note that the invariant holds for this modified right frontier.
    Because~$G$ has a single-source all vertices and edges are drawn in this way.
    Because~$\Gamma$ and~$\Gamma'$ have the same faces they are the same embedding.
    Finally,~$\Gamma'$ is level planar by construction, which shows the claim.
    \begin{figure}[t]
        \centering
        \includegraphics{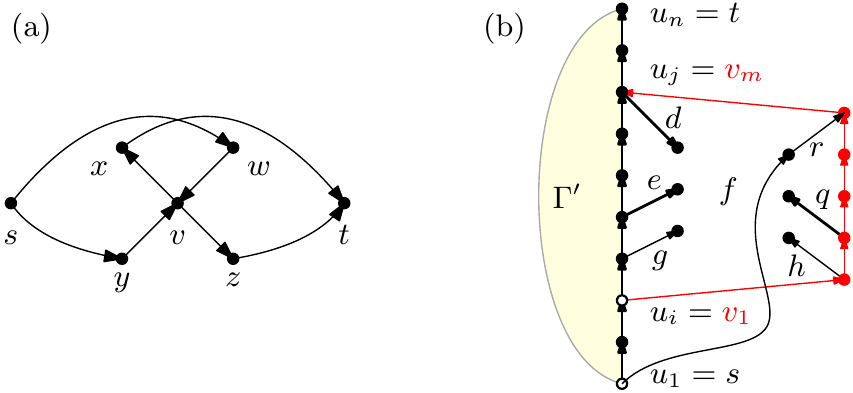}
        \caption{
            Proof of Lemma~\ref{lem:st-lp-is-p}.
            The incoming and outgoing edges around each vertex are consecutive~(a).
            Creating the level-planar embedding~$\Gamma'$ by attaching the path~$v_1, v_2, \dots, v_m$ (drawn in red) to the right frontier~$u_1, u_2, \dots, u_n$, thereby creating a new face~$f$.
            Discovered edges are drawn thickly.
            The edges~$e, g, h, q, r, d$ cannot exist.
        }
        \label{fig:st-lp-is-p}
    \end{figure}
\end{proof}

Thus, a planar embedding~$\Gamma$ of a graph~$G$ is level-planar if and only if it can be augmented to an~$st$-graph~$G' \supseteq G$ such that all augmentation edges can be embedded in the faces of~$\Gamma$ without crossings.
This gives rise to the following characterization.

\begin{lemma}
    \label{lem:level-planarity-characterisation}
    Let~$G$ be a single-source~$k$-level graph with a unique apex~$t$.
    Then~$G$ is level planar if and only if it has a planar embedding where every vertex~$v$ with~$\ell(v) < k$ is incident to at least one face~$f$ so that~$v$ is not an apex of~$f$.
\end{lemma}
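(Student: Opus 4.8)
The plan is to prove both implications, using the bridge to $st$-graphs provided by Lemma~\ref{lem:st-lp-is-p}. The forward (``only if'') direction is the easy one: if $G$ is level planar, I fix any level-planar drawing and simply read off a suitable face for each vertex. The backward (``if'') direction is the substantial one: starting from a planar embedding $\Gamma$ with the stated face property, I will saturate $G$ into an $st$-graph by adding only upward edges that can be routed inside faces of $\Gamma$, and then invoke Lemma~\ref{lem:st-lp-is-p} to conclude that the augmented embedding, and hence $\Gamma$ itself, is level planar.

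For the ``only if'' direction, suppose $G$ is level planar and fix a level-planar drawing realizing an embedding $\Gamma$. For a vertex $v$ with $\ell(v) < k$, I consider the face $f$ met by the vertical ray shot upward from $v$. This face is incident to $v$, and since edges are $y$-monotone with endpoints on integer levels, the supremum of the $y$-coordinates on its boundary is attained at a boundary vertex, which must lie strictly above $v$. Hence the apex of $f$ has level greater than $\ell(v)$, so $v$ is not an apex of $f$, as required. (If the ray escapes to infinity, $f$ is the outer face, whose apex is $t$ at level $k > \ell(v)$.)

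For the ``if'' direction, I will repeatedly eliminate sinks other than $t$. While $G$ has a sink $w \neq t$, I note that $\ell(w) < k$ because $t$ is the unique apex, so the hypothesis supplies a face $f_w$ incident to $w$ whose apex $a$ satisfies $\ell(a) > \ell(w)$; since $w$ is a sink it has no edge to $a$, so I add the upward edge $(w, a)$ routed inside $f_w$. This keeps the drawing planar, preserves the single source $s$ and the unique apex $t$, cannot create a directed cycle (every edge increases $\ell$), and strictly decreases the number of sinks. After finitely many steps I obtain an $st$-graph $G_{st} \supseteq G$ with a planar embedding $\Gamma_{st}$ extending $\Gamma$; by Lemma~\ref{lem:st-lp-is-p} the embedding $\Gamma_{st}$ is level planar, and restricting it to $G$ shows that $\Gamma$ is level planar.

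The hard part will be verifying that the face property survives each augmentation step, so that the loop can actually continue; that is, that it is a genuine invariant rather than merely a hypothesis about the initial embedding. Adding $(w, a)$ splits $f_w$ into two faces, each of which still has $a$ on its boundary at level $\ell(a) > \ell(w)$. For any remaining sink $w'$, if its witness face differs from $f_w$ it is untouched; if its witness is $f_w$, then $w'$ is not an apex of $f_w$ and hence $\ell(w') < \ell(a)$, so whichever of the two new faces contains $w'$ still carries $a$ as a boundary vertex strictly above $w'$. Thus $w'$ retains a valid witness face, which is exactly what is needed to close the induction and guarantee that the process terminates with $t$ as the only sink.
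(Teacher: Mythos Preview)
Your proof is correct and follows essentially the same approach as the paper: the forward direction reads off a witness face from a level-planar drawing via an upward ray (the paper splits into the cases of an outgoing edge versus none, which amounts to the same thing), and the backward direction augments each non-$t$ sink by an edge to the apex of its witness face and then invokes Lemma~\ref{lem:st-lp-is-p}. The only cosmetic difference is that you perform the augmentation iteratively and maintain the face property as an invariant, whereas the paper adds all augmentation edges at once and justifies planarity by observing that all edges inserted into a given face share the apex as a common endpoint; both arguments yield the same $st$-supergraph.
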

\begin{proof}
    Let~$\Gamma_l$ be a level-planar drawing of~$G$.
    Consider a vertex~$v$ such that it is~$\ell(v) < \ell(t)$.
    If~$v$ has an outgoing edge~$(v, w)$, then~$v$ and~$w$ are incident to some shared face~$f$.
    Because it is~$\ell(v) < \ell(w)$, vertex~$v$ is not an apex of~$f$.
    If~$v$ has no outgoing edges, start walking upwards from~$v$ in a straight line.
    Stop walking upwards if an edge~$(u, w)$ or a vertex~$w$ is encountered.
    Then~$v$ and~$w$ are again incident to some shared face~$f$.
    Moreover, it is~$\ell(v) < \ell(w)$, and therefore~$v$ is not an apex of~$f$.
    If no edge or vertex is encountered when walking upwards,~$v$ must lie on the outer face.
    Because~$t$ lies on the outer face and it is~$\ell(v) < \ell(t)$, vertex~$v$ is not an apex of the outer face.
    Finally, because~$\Gamma_l$ is level planar it is, of course, also planar.

    Now let~$\Gamma_p$ be a planar embedding of~$G$.
    The idea is to augment~$G$ and~$\Gamma_p$ by inserting edges so that~$G$ becomes an~$st$-graph together with a planar embedding~$\Gamma_p$.
    To that end, consider a sink~$v \neq t$ of~$G$.
    By assumption,~$v$ is incident to at least one face~$f$ so that~$v$ is not an apex of~$f$.
    Hence, it is~$\ell(v) < \ell(\apex(f))$.
    So the augmentation edge~$e = (v, \apex(f))$ can be inserted into~$G$ without creating a cycle.
    Further,~$e$ can be embedded into~$f$.
    Because all augmentation edges embedded into~$f$ have endpoint~$\apex(f)$, the embedding~$\Gamma_p$ of~$G$ remains planar.
    This means that~$G$ can be augmented so that~$t$ becomes the only sink while maintaining the planarity of~$\Gamma_p$.
    Because~$G$ also has a single source,~$G$ is now an~$st$-graph and it follows from Lemma~\ref{lem:st-lp-is-p} that~$\Gamma_p$ is not only planar, but also level planar.
\end{proof}

\paragraph{Decomposition Trees and SPQR-Trees.}

Our description of decomposition trees follows Angelini et al.~\cite{abr-tmdopg-14}.
Let~$G$ be a biconnected graph.
A \emph{separation pair} is a subset~$\{u, v\} \subseteq V$ whose removal from~$G$ disconnects~$G$.
%
Let~$\{u, v\}$ be a separation pair and let~$H_1, H_2$ be two subgraphs of~$G$ with~$H_1 \cup H_2 = G$ and~$H_1 \cap H_2 = \{u, v\}$.
Define the tree~$\mathcal T$ that consists of two \emph{nodes}~$\mu_1$ and~$\mu_2$ connected by an undirected \emph{arc} as follows.
For~$i = 1, 2$ node~$\mu_i$ is equipped with a multigraph~$\skel(\mu_i) = H_i + e_i$, called its \emph{skeleton}, where~$e_i = (u, v)$ is called a \emph{virtual} edge.
The arc~$(\mu_1,\mu_2)$ links the two virtual edges~$e_i$ in~$\skel(\mu_i)$ with each other.
We also say that the virtual edge~$e_1$ \emph{corresponds} to~$\mu_2$ and likewise that~$e_2$ corresponds to~$\mu_1$.
The idea is that~$\skel(\mu_1)$ provides a more abstract view of~$G$ where~$e_1$ serves as a placeholder for~$H_2$.
More generally, there is a bijection~$\corr_\mu \colon E(\skel(\mu)) \to N(\mu)$ that maps every virtual edge of~$\skel(\mu)$ to a neighbor of~$\mu$ in~$\mathcal T$, and vice versa.
If it is~$\corr_\mu((u, v)) = \nu$, then~$\nu$ is said to have \emph{poles}~$u$ and~$v$ in~$\mu$.
If~$\mu$ is clear from the context we simply say that~$\nu$ has poles~$u, v$.
When the underlying graph is a level graph, we assume~$\ell(u) \le \ell(v)$ without loss of generality.
For an arc~$(\nu,\mu)$ of~$\mathcal T$, the virtual edges~$e_1,e_2$ with~$\corr_\mu(e_1) = \nu$ and~$\corr_\nu(e_2) = \mu$ are called \emph{twins}, and~$e_1$ is called the \emph{twin} of~$e_2$ and vice versa.
This procedure is called a \emph{decomposition}, see Fig.~\ref{fig:decomposition-tree} on the left.
It can be re-applied to skeletons of the nodes of~$\mathcal T$, which leads to larger trees with smaller skeletons.
A tree obtained in this way is a \emph{decomposition tree} of~$G$.
\begin{figure}[t]
    \centering
    \includegraphics[width=\linewidth]{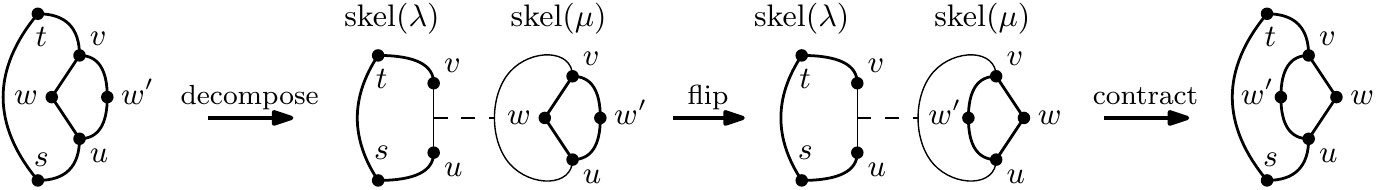}
    \caption{
        Decompose the embedded graph~$G$ on the left at the separation pair~$u, v$.
        This gives the center-left decomposition tree whose skeletons are embedded as well.
        Reflecting the embedding of~$\skel(\mu)$ or, equivalently, flipping~$(\lambda, \mu)$, yields the same decomposition tree with a different embedding of~$\skel(\mu)$.
        Contract~$(\lambda, \mu)$ to obtain the embedding on the right.
    }
    \label{fig:decomposition-tree}
\end{figure}
A decomposition can be undone by \emph{contracting} an arc~$(\mu_1,\mu_2)$ of~$\mathcal T$, forming a new node~$\mu$ with a larger skeleton as follows.
Let~$e_1, e_2$ be twin edges in~$\skel(\mu_1), \skel(\mu_2)$.
The skeleton of~$\mu$ is the union of~$\skel(\mu_1)$ and~$\skel(\mu_2)$ without the two twin edges~$e_1,e_2$.
Contracting all arcs of a decomposition tree of~$G$ results in a decomposition tree consisting of a single node whose skeleton is~$G$.
See Fig.~\ref{fig:decomposition-tree} on the right.
Let~$\mu$ be a node of a decomposition tree with a virtual edge~$e$ with~$\corr_\mu(e) = \nu$.
The \emph{expansion graph} of~$e$ and~$\nu$ in~$\mu$, denoted by~$G(e)$ and~$G(\mu, \nu)$, respectively, is the graph obtained by removing the twin of~$e$ from~$\skel(\nu)$ and contracting all arcs in the subtree that contains~$\nu$.

Each skeleton of a decomposition tree of~$G$ is a minor of~$G$.
So if~$G$ is planar each skeleton of a decomposition tree~$\mathcal T$ of~$G$ is planar as well.
If~$(\mu_1,\mu_2)$ is an arc of~$\mathcal T$, and~$\skel(\mu_1)$ and~$\skel(\mu_2)$ have fixed planar embeddings~$\Gamma_1$ and~$\Gamma_2$, respectively, then the skeleton of the node~$\mu$ obtained from contracting~$(\mu_1,\mu_2)$ can be equipped with an embedding~$\Gamma$ by merging these embeddings along the twin edges corresponding to~$(\mu_1,\mu_2)$; see Fig.~\ref{fig:decomposition-tree} center.
This requires at least one of the virtual edges~$e_1$ in~$\skel(\mu_1)$ with~$\corr_{\mu_1}(e_1) = \mu_2$ or~$e_2$ in~$\skel(\mu_2)$ with~$\corr_{\mu_2}(e_2) = \mu_1$ to be incident to the outer face.
If we equip every skeleton with a planar embedding and contract all arcs, we obtain a planar embedding of~$G$.
This embedding is independent of the order of the edge contractions.
Thus, every decomposition tree~$\mathcal T$ of~$G$ represents (not necessarily all) planar embeddings of~$G$ by choosing a planar embedding of each skeleton and contracting all arcs.
Let~$\eref$ be an edge of~$G$.
Rooting~$\mathcal T$ at the unique node~$\muref$ whose skeleton contains the real edge~$\eref$ identifies a unique parent virtual edge in each of the remaining nodes; all other virtual edges are called \emph{child virtual edges}.
The arcs of~$\mathcal T$ become directed from the parent node to the child node.
Restricting the embeddings of the skeletons so that the parent virtual edge (the edge~$\eref$ in case of~$\muref$) is incident to the outer face, we obtain a representation of (not necessarily all) planar embeddings of~$G$ where~$\eref$ is incident to the outer face.
Let~$\mu$ be a node of~$\mathcal T$ and let~$e$ be a child virtual edge in~$\skel(\mu)$ with~$\corr_\mu(e) = \nu$.
Then the expansion graph~$G(\mu, \nu)$ is simply referred to as~$G(\nu)$.

The \emph{SPQR-tree} is a special decomposition tree whose skeletons are precisely the triconnected components of~$G$.
It has four types of nodes: S-nodes, whose skeletons are cycles, P-nodes, whose skeletons consist of three or more parallel edges between two vertices, and R-nodes, whose skeletons are simple triconnected graphs.
Finally, a Q-node has a skeleton consisting of two vertices connected by one real and by one virtual edge.
This means that in the skeletons of all other node types all edges are virtual.
In an SPQR-tree the embedding choices are of a particularly simple form.
The skeletons of Q- and S-nodes have a unique planar embedding (not taking into account the choice of the outer face).
The child virtual edges of P-node skeletons may be permuted arbitrarily, and the skeletons of R-nodes are 3-connected, and thus have a unique planar embedding up to reflection.
We call this the \emph{skeleton-based} embedding representation.
There is also an \emph{arc-based} embedding representation.
Here the embedding choices are (i) the linear order of the children in each P-node, and (ii) for each arc~$(\lambda, \mu)$ whose target~$\mu$ is an R-node whether the embedding of the expansion graph~$G(\mu)$ should be \emph{flipped}.
To obtain the embedding of~$G$, we contract the edges of~$\mathcal T$ bottom-up.
Consider the contraction of an arc~$(\lambda,\mu)$ whose child~$\mu$ used to be an R-node in~$\mathcal T$.
At this point,~$\skel(\mu)$ is equipped with a planar embedding~$\Gamma_\mu$.
If the embedding should be flipped, we reflect the embedding~$\Gamma_\mu$ before contracting~$(\lambda,\mu)$, otherwise we simply contract~$(\lambda,\mu)$.
The arc-based and the skeleton-based embedding representations are equivalent.
See Fig.~\ref{fig:spqr-tree} and Fig.~\ref{fig:lp-tree}~(a,b) for examples of a planar graph and its SPQR-tree.
\begin{figure}
    \centering
    \includegraphics[width=\linewidth]{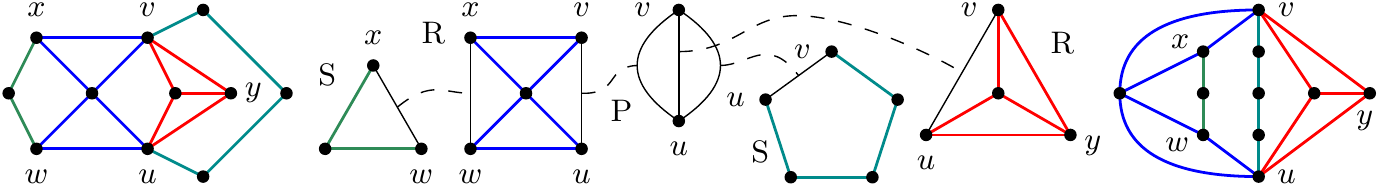}
    \caption{
        A planar graph on the left and its SPQR-tree in the middle.
        The five nodes of the SPQR-tree are represented by their respective skeleton graphs.
        Dashed edges connect twin virtual edges and colored edges correspond to Q-nodes.
        The embedding of the graph on the right is obtained by flipping the embedding of the blue R-node and swapping the middle and right edge of the P-node.
    }
    \label{fig:spqr-tree}
\end{figure}

\section{A Decomposition Tree for Level Planarity}
\label{sec:decomposition}

We construct a decomposition tree of a given single-source level graph~$G$ whose underlying undirected graph is biconnected that represents all level-planar embeddings of~$G$, called the \emph{LP-tree}.
As noted in the Preliminaries, we assume that~$G$ has a unique apex~$t$, for which~$\ell(t) = d(t)$ holds true.
The LP-tree for~$G$ is constructed based on the SPQR-tree for~$G$.
We keep the notion of S-, P-, Q- and R-nodes and construct the LP-tree so that the nodes behave similarly to their namesakes in the SPQR-tree.
The skeleton of a P-node consists of two vertices that are connected by at least three parallel virtual edges that can be arbitrarily permuted.
The skeleton of an R-node~$\mu$ is equipped with a \emph{reference embedding}~$\Gamma_\mu$, and the choice of embeddings for such a node is limited to either~$\Gamma_\mu$ or its reflection.
Unlike in SPQR-trees, the skeleton of~$\mu$ need not be triconnected, instead it can be an arbitrary biconnected planar graph.
The embedding of R-node skeletons being fixed up to reflection allows us to again use the equivalence of the arc-based and the skeleton-based embedding representations.

The construction of the LP-tree starts out with an SPQR-tree~$\mathcal T$ of~$G$.
Explicitly label each node of~$\mathcal T$ as an S-, P-, Q- or R-node.
This way, we can continue to talk about S-, P-, Q- and R-nodes of our decomposition tree even when they no longer have their defining properties in the sense of SPQR-trees.
Assume that the edge~$(s, t)$ to be incident to the outer face of every level-planar drawing of~$G$ (Lemma~\ref{lem:t-extension}), i.e., consider~$\mathcal T$ rooted at the Q-node corresponding to~$(s, t)$.
The construction of our decomposition tree works in two steps.
First, decompose the graph further by decomposing P-nodes in order to disallow permutations that lead to embeddings that are not level planar.
Second, contract arcs of the decomposition tree, each time fixing a reference embedding for the resulting node, so that we can consider it as an R-node, such that the resulting decomposition tree represents exactly the level-planar embeddings of~$G$.
The remainder of this section is structured as follows.
The details and correctness of the first step are given in Section~\ref{ssec:p-node-splits}.
Section~\ref{ssec:arc-processing} gives the algorithm for constructing the final decomposition tree~$\mathcal T$.
It follows from the construction that all embeddings it represents are level-planar, and Section~\ref{sssec:correctness-arc-processing} shows that, conversely, it also represents every level-planar embedding.
In Section~\ref{ssec:implementation-in-linear-time}, we present a linear-time implementation of the construction algorithm.

\subsection{P-Node Splits}
\label{ssec:p-node-splits}

In SPQR-trees, the children of P-nodes can be arbitrarily permuted.
We would like P-nodes of the LP-tree to have the same property.
Hence, we decompose skeletons of P-nodes to disallow orders that lead to embeddings that are not level planar.
The decomposition is based on the height of the child virtual edges, which we define as follows.
Let~$\mu$ be a node of a rooted decomposition tree and let~$u$ and~$v$ be the poles of~$\mu$.
Define~$V(\mu) = V(G(\mu)) \setminus \{u, v\}$.
The \emph{height} of~$\mu$ and of the child virtual edge~$e$ with~$\corr(e) = \mu$ is~$d(\mu) = d(e) = d(V(\mu))$.

Now let~$\mu$ be a P-node, and let~$\Gamma$ be a level-planar embedding of~$G$.
The embedding~$\Gamma$ induces a linear order of the child virtual edges of~$\mu$.
This order can be obtained by splitting the combinatorial embedding of~$\skel(\mu)$ around~$u$ at the parent edge.
Then the following is true.

\begin{lemma}
    \label{lem:p-node-heights}
    Let~$\mathcal T$ be a decomposition tree of~$G$, let~$\mu$ be a P-node of~$\mathcal T$ with poles~$u, v$, and let~$e_{\max}$ be a child virtual edge of~$\mu$ with maximal height.
    Further, let~$\Gamma$ be a level-planar embedding of~$G$ that is represented by~$\mathcal T$.
    If the height of~$e_{\max}$ is at least~$\ell(v)$, then~$e_{\max}$ is either the first or the last edge in the linear ordering of the child virtual edges induced by~$\Gamma$.
\end{lemma}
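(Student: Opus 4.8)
The plan is to argue by contradiction: assume $e_{\max}$ is neither first nor last in the linear order induced by $\Gamma$, and derive a violation of the defining restriction on admissible level-planar drawings. Since $\mathcal T$ is rooted at the Q-node corresponding to $(s,t)$ and $\Gamma$ is represented by $\mathcal T$, the parent edge of $\mu$ is incident to the outer face, and splitting the rotation around~$u$ at the parent edge yields the child order $e_1, \dots, e_m$. If $e_{\max} = e_i$ is neither first nor last, then $2 \le i \le m-1$ and $e_i$ has child neighbours $e_{i-1}$ and $e_{i+1}$ on both sides.

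First I would pin down the crucial vertex and its enclosure. Let $w \in V(\nu_i)$ with $\corr(e_i) = \nu_i$ be a vertex realizing the height, i.e.\ $d(w) = d(e_i) = d(e_{\max}) \ge \ell(v)$. Because $e_i$ is flanked by $e_{i-1}$ and $e_{i+1}$, the only faces of $\skel(\mu)$ incident to $e_i$ are the two inner faces it shares with $e_{i-1}$ and $e_{i+1}$; the outer face touches the skeleton only on the side of the parent edge. Consequently $G(\nu_i)$ is enclosed and $w$ is \emph{not} incident to the outer face of $\Gamma$.

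The heart of the argument is a locality claim: every face of $\Gamma$ incident to $w$ has all of its vertices contained in $V(\nu_{i-1}) \cup V(\nu_i) \cup V(\nu_{i+1}) \cup \{u, v\}$. This holds because all edges incident to $w$ lie in $G(\nu_i)$, which attaches to the rest of $G$ only at the poles $u$ and $v$; hence a face incident to $w$ either stays inside $G(\nu_i)$ or crosses into an adjacent region through a pole, and since the expansion graphs of the child edges occupy consecutive, non-interleaving angular slots around $u$ and $v$, the only adjacent regions are the two boundary faces shared with the immediate neighbours $G(\nu_{i-1})$ and $G(\nu_{i+1})$. Given this, I would bound the level of every vertex $a$ on such a face: if $a \in \{u,v\}$ then $\ell(a) \le \ell(v) \le d(w)$, and if $a \in V(\nu_j)$ for some $j \in \{i-1,i,i+1\}$ then $\ell(a) \le d(a) \le d(e_j) \le d(e_{\max}) = d(w)$, using that demands dominate levels and that $e_{\max}$ has maximal height among the children. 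Hence every face incident to $w$ has its apex at level at most $d(w)$.

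Finally I would invoke the defining restriction on admissible level-planar drawings: since $w$ is not on the outer face, it is incident to some inner face $f$ for which $d(w) < \ell(a)$ holds for every apex $a$ of $f$; in particular $f$ must have an apex at level strictly above $d(w)$. This contradicts the bound just established, so $e_{\max}$ cannot be sandwiched and must be first or last. I expect the main obstacle to be the locality claim, which relies on the facts that in an embedding represented by $\mathcal T$ the expansion graphs of the child edges of $\mu$ occupy consecutive angular slots around both poles and that a sandwiched child is genuinely enclosed; I would make these precise using the way embeddings are assembled by contracting arcs along twin edges.
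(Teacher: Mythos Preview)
Your proposal is correct and follows essentially the same approach as the paper: argue by contradiction, pick a vertex~$w$ realizing the height of~$e_{\max}$, observe that if~$e_{\max}$ is sandwiched then every face incident to~$w$ has vertices only among the poles and the inner vertices of the three adjacent expansion graphs, bound all their levels by~$d(w)$, and conclude that~$w$ violates the apex condition required of a level-planar embedding. The only cosmetic difference is that the paper phrases the final contradiction via Lemma~\ref{lem:level-planarity-characterisation}, whereas you invoke the demand-based restriction on admissible level-planar drawings directly; the substance is the same.
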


\begin{proof}
    Let~$\nu = \corr_\mu(e_{\max})$.
    Further, let~$G_{\max} = G(e_{\max})$, and let~$w \in V(\nu)$ with~$d(w) = d(\nu)$.
    If~$d(w) < \ell(v)$, the statement of the lemma is trivially satisfied, so assume~$d(w) \ge \ell(v)$ and suppose that~$e_{\max}$ is not the first edge or last edge.
    Let~$\Gamma_\mu$ be the embedding of~$\skel(\mu)$ in the corresponding skeleton-based representation of~$\Gamma$.
    Then there are child virtual edges~$e_1,e_2$ immediately preceding and succeeding edge~$e_{\max}$ in~$\Gamma_\mu$, respectively.
    By construction of the embedding~$\Gamma$ via contractions from the embeddings of skeletons, it follows that~$w$ shares a face only with the inner vertices of~$G(e_i)$ for~$i=1,2$, the inner vertices of~$G_{\max}$, and~$u$ and~$v$.
    By the choice of~$e_{\max}$ it follows that~$d(w) \ge \ell(w')$ for all inner vertices~$w'$ of~$G(e_i)$,~$i=1,2$, and the choice of~$w$ guarantees that~$d(w) \ge \ell(w')$ for all inner vertices~$w'$ of~$G(e_{\max})$.
    Moreover, it is~$d(w) \ge \ell(v) \ge \ell(u)$ by assumption.
    It follows that~$w$ is not incident to any face that has an apex~$a$ with~$d(w) < \ell(a)$.
    Beause~$w$ is an inner vertex of~$G_{\max}$ it is not incident to the outer face.
    Thus,~$\Gamma$ is not level-planar by Lemma~\ref{lem:level-planarity-characterisation}, a contradiction.
\end{proof}

Lemma~\ref{lem:p-node-heights} motivates the following modification of a decomposition tree~$\mathcal T$.
Take a P-node~$\mu$ with poles~$u, v$ that has a child edge whose height is at least~$\ell(v)$.
Denote by~$\lambda$ the parent of~$\mu$.
Further, let~$e_{\max}$ be a child virtual edge with maximum height and let~$e_{\mathrm{parent}}$ denote the parent edge of~$\skel(\mu)$.
Obtain a new decomposition tree~$\mathcal T'$ by splitting~$\mu$ into two nodes~$\mu_1$ and~$\mu_2$ representing the subgraph~$H_1$ consisting of the edges~$e_{\max}$ and~$e_{\mathrm{parent}}$, and the subgraph~$H_2$ consisting of the remaining child virtual edges, respectively; see Fig.~\ref{fig:chain}.
\begin{figure}[t]
    \centering
    \includegraphics{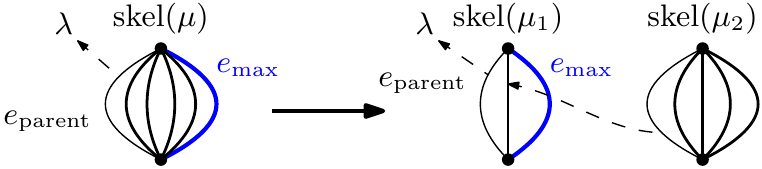}
    \caption{
        Result of a P-node~$\mu$ split with parent~$\lambda$ and child with maximum height~$\nu$.
        Note that after the split,~$\mu_1$ is an R-node and~$\mu_2$ has one less child than~$\mu$ had.
    }
    \label{fig:chain}
\end{figure}
Note that the skeleton of~$\mu_1$, which corresponds to~$H_1$, has only two child virtual edges.
We therefore define it to be an R-node.
Moreover, observe that in any embedding of~$\skel(\mu)$ that is obtained from choosing embeddings for~$\skel(\mu_1)$ and~$\skel(\mu_2)$ and contracting the arc~$(\mu_1,\mu_2)$, the edge~$e_{\max}$ is the first or last child edge.
Conversely, because~$\mu_2$ is a P-node, all embeddings where~$e_{\max}$ is the first or last child edge are still represented by~$\mathcal T'$.
Apply this decomposition iteratively, creating new R-nodes on the way, until each P-node~$\mu$ with poles~$u$ and~$v$ has only child virtual edges~$e$ that have height at most~$\ell(v) - 1$.
We say that a node~$\nu$ with poles~$x, y$ has \emph{\ishape} when the height of~$G(\nu)$ is less than~$\ell(y)$.
The following theorem sets the stage to prove that after this decomposition, the children of P-nodes can be arbitrarily permuted.

\begin{theorem}
    Let~$G$ be a biconnected single-source graph with unique apex~$t$.
    There exists a decomposition tree~$\mathcal T$ that represents all level-planar embeddings of~$G$ such that all children of P-nodes in~$\mathcal T$ have~\ishape.
    \label{thm:p-node-i-shape}
\end{theorem}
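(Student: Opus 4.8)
The plan is to take $\mathcal{T}$ to be the tree produced by the iterated P-node split described above and to prove by induction on the number of splits that the invariant ``the current tree represents every level-planar embedding of $G$'' is maintained throughout. For the base case I would start from the SPQR-tree of $G$ rooted at the Q-node of $(s,t)$; being an SPQR-tree it represents \emph{all} planar embeddings of $G$ in which $(s,t)$ is incident to the outer face. By the reduction of the Preliminaries (Lemma~\ref{lem:t-extension}) it suffices to consider level-planar embeddings in which $(s,t)$ lies on the outer face, and every such embedding is in particular planar, so the initial tree indeed represents all level-planar embeddings. It then remains to show that a single split preserves this invariant and that the process terminates in a tree whose P-node children all have \ishape.

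For the inductive step, fix a split applied to a P-node $\mu$ with poles $u,v$ that still has a child of height at least $\ell(v)$, and let $e_{\max}$ be a child of maximum height. Writing $S$ for the set of embeddings represented before the split and $S'$ for the set represented afterwards, the crux is that $S'$ consists of exactly those embeddings in $S$ in which $e_{\max}$ is first or last among the children of $\mu$. I would establish this from the structure of the split: the new R-node $\mu_1$ carries the three parallel edges $e_{\mathrm{parent}}$, $e_{\max}$ and the twin $e'$ of $\mu_2$ and, being an R-node, has its embedding fixed up to reflection; cutting its two reflections at the parent edge $e_{\mathrm{parent}}$ gives the child orders $e_{\max}, e'$ and $e', e_{\max}$. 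Expanding $e'$ into an arbitrary permutation of the children of the P-node $\mu_2$ therefore yields exactly those orders in which $e_{\max}$ is first or last, while no embedding choice outside $\mu$ is touched. Since $e_{\max}$ has height at least $\ell(v)$, Lemma~\ref{lem:p-node-heights} shows that every level-planar embedding in $S$ already has $e_{\max}$ first or last; hence no level-planar embedding is lost, and the level-planar embeddings in $S'$ coincide with those in $S$. This closes the induction.

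To finish, I would check termination and the \ishape property. Each split removes $e_{\max}$ from the children of a P-node and introduces no new P-node child (the edge $e_{\max}$ becomes a child of the R-node $\mu_1$, and $e'$ is a child of $\mu_1$ resp.\ the parent edge of $\mu_2$), so the total number of child virtual edges of P-nodes strictly decreases by one; hence only finitely many splits occur. By construction the process halts exactly when no P-node with poles $u,v$ has a child $e$ of height at least $\ell(v)$, that is, when every such child satisfies $d(e) \le \ell(v) - 1 < \ell(v)$. Since all children of a P-node share its poles, this is precisely the statement that every P-node child has \ishape, and together with the invariant this gives the theorem.

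The step I expect to be the main obstacle is the structural identity in the inductive step, namely that $S'$ consists of exactly the embeddings of $S$ with $e_{\max}$ extremal: one must verify carefully that reflecting the R-node $\mu_1$ keeps $e_{\mathrm{parent}}$ on the outer face while moving $e_{\max}$ between the first and last positions, and that the free permutation of the remaining children of $\mu_2$ survives the contraction unchanged. A related point to treat is the degenerate situation in which three or more children of $\mu$ have height at least $\ell(v)$: since only two positions are extremal, Lemma~\ref{lem:p-node-heights} then forces $G$ to have no level-planar embedding at all, so the invariant holds vacuously, and one need only confirm that the splitting still terminates and yields a well-formed tree once $\mu_2$ drops below two children and ceases to be a P-node.
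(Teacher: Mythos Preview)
Your proposal is correct and follows essentially the same approach as the paper: start from the rooted SPQR-tree, iteratively split P-nodes by peeling off a maximum-height child into a new three-edge R-node, and use Lemma~\ref{lem:p-node-heights} to argue that each split discards only non-level-planar embeddings. The paper's argument is exactly the paragraph preceding the theorem statement; your version is somewhat more explicit about termination and the degenerate case of three maximal children (which cannot occur since the paper assumes $G$ is level-planar), but the substance is the same.
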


We see that this property ensures that P-nodes in our decomposition of level-planar graphs work analogously to those of SPQR-trees for planar graphs.
Namely, if we have a level-planar embedding~$\Gamma$ of~$G$ and consider a new embedding~$\Gamma'$ that is obtained from~$\Gamma$ by reordering the children of P-nodes, then also~$\Gamma'$ is level-planar.
We show that the~$st$-augmentation from Lemma~\ref{lem:st-augmentation} can be assumed to have certain useful properties.
The proof that the children of P-nodes can be arbitrarily permuted then uses Lemma~\ref{lem:st-lp-is-p} and the fact that the children of P-nodes in SPQR-trees can be arbitrarily permuted.
\begin{lemma}
    Let~$\Gamma$ be a level-planar embedding of~$G = (V, E)$ and let~$\mu$ be a node of~$\mathcal T$ with poles~$u, v$ so that~$G(\mu)$ has \ishape.
    Then there exists a planar~$st$-augmentation~$G' = (V, E \cup E_{st})$,~$\Gamma'$ of~$G$ and~$\Gamma$ so that~$u, v$ separates~$V(G(\mu))$ from~$V \setminus V(G(\mu))$ in~$G'$.
    \label{lem:i-shape-augmentation}
\end{lemma}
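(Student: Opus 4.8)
The plan is to construct the augmentation in two stages—first inside $G(\mu)$, then outside—and to use the \ishape property to force every augmentation edge to stay on one side of the separation pair. The starting point is that $G(\mu)$ is itself a single-source graph with a unique apex. Since $\mu$ has \ishape, every inner vertex of $G(\mu)$ has demand, and hence level, strictly below $\ell(v)$, so $v$ is the unique apex of $G(\mu)$ and, having no outgoing edge inside $G(\mu)$, also a sink of $G(\mu)$. Because $G$ has the single source $s$, every directed path reaching an inner vertex of $G(\mu)$ must enter through $u$ or $v$; as $v$ is a sink of $G(\mu)$ it reaches no inner vertex, so every inner vertex is reachable from $u$ inside $G(\mu)$ and $u$ is the unique source. (If $\ell(u)=\ell(v)$ both poles would be sinks of $G(\mu)$ and could reach no inner vertex, so $G(\mu)$ has no inner vertices and the separation statement is vacuous; hence I may assume $\ell(u)<\ell(v)$.)

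For the first stage I would apply Lemma~\ref{lem:st-augmentation} to $G(\mu)$ with the embedding induced by $\Gamma$. This yields a set $E_{\mathrm{in}}$ of edges whose addition makes $v$ the unique sink of $G(\mu)$, each drawable as a $y$-monotone curve inside a face of $G(\mu)$. Every such edge connects a former inner sink either to another vertex of $G(\mu)$ across an interior face or, when the upward walk leaves $G(\mu)$ through its outer face, to the apex $v$; in both cases both endpoints lie in $V(G(\mu))$. Interior faces of $G(\mu)$ are faces of $G$, and the edges directed towards $v$ can be drawn in the two boundary faces $f_1,f_2$ of $G$ hugging the boundary of $G(\mu)$, so all of $E_{\mathrm{in}}$ embeds in $\Gamma$ without crossings and without creating an edge across the separation.

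The second stage augments the remaining sinks, all of which now lie outside $G(\mu)$ (together with $t$, and possibly the pole $v$). Here I would invoke Lemma~\ref{lem:level-planarity-characterisation}: each such sink $w'\ne t$ is incident to a face $f$ where it is not the apex, and I connect $w'$ to $\apex(f)$. The crucial point is that $\apex(f)$ is never an inner vertex of $G(\mu)$. An outside vertex is incident only to faces meeting the outside, i.e.\ either faces disjoint from $G(\mu)$, whose apex lies outside or is a pole, or the boundary faces; and every boundary face carries $v$ on its boundary, so after the first stage its outside-facing piece still touches $v$ and therefore has an apex of level at least $\ell(v)$. Since all inner vertices have level below $\ell(v)$, no face an outside sink can see has an inner apex, and the resulting edges $E_{\mathrm{out}}$ likewise avoid the separation.

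Finally I would assemble $G' = (V, E\cup E_{\mathrm{in}}\cup E_{\mathrm{out}})$ with the embedding $\Gamma'$ obtained by adding all these $y$-monotone edges. Both stages add only outgoing edges at sinks, so no new source or sink arises and $s,t$ stay the unique source and sink; thus $G'$ is an $st$-graph, and by Lemma~\ref{lem:st-lp-is-p} the planar embedding $\Gamma'$ is level-planar and extends $\Gamma$. By construction no edge of $E_{\mathrm{in}}\cup E_{\mathrm{out}}$ joins an inner vertex of $G(\mu)$ to a vertex of $V\setminus V(G(\mu))$, so $\{u,v\}$ still separates $V(G(\mu))$ from $V\setminus V(G(\mu))$ in $G'$. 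I expect the main obstacle to be the structural claim of the second stage: verifying that the \ishape property really caps every face visible to an outside sink at level $\ell(v)$ via the apex $v$, which requires a careful look at how the boundary faces $f_1,f_2$ are subdivided by the first-stage edges.
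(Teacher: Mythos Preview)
Your two–stage construction is a genuinely different route from the paper's. The paper does not build the augmentation carefully; it starts from an \emph{arbitrary} $st$-augmentation $G',\Gamma'$ of $G,\Gamma$ (Lemma~\ref{lem:st-augmentation}), deletes every ``critical'' edge $(w,x)$ that has exactly one endpoint in $V(\mu)$, observes that the freed endpoints and $v$ now share a common face, and replaces each deleted edge by $(w,v)$. The \ishape guarantees $\ell(w)<\ell(v)$ in every case, all new edges share the endpoint $v$ (hence embed planarly), and the result is again an $st$-graph. Working entirely inside $\Gamma$ from the outset avoids the compatibility question your approach runs into.

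That question is the actual weak point of your proposal, and it sits in Stage~1 rather than Stage~2. When you apply Lemma~\ref{lem:st-augmentation} to $G(\mu)$ with its induced embedding, the upward walk from a boundary sink $w$ need not escape to infinity: it may re-enter the boundary of $G(\mu)$ at some vertex $x\ne v$, producing an augmentation edge $(w,x)$ that lies in the \emph{outer} face of $G(\mu)$ but is not ``directed towards $v$''. Your dichotomy ``across an interior face, or to $v$'' omits this case. Such an edge must be embedded in one of the $\mu$-incident faces $f_1,f_2$ of $\Gamma$, and there is no reason the $y$-monotone segment realising it in the standalone drawing of $G(\mu)$ avoids the outside boundary $Q_1$; nor is it clear that all outer-face chords taken together (those to $v$ and those of type $(w,x)$) are simultaneously non-crossing in $f_1$. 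The ``hugging the boundary'' picture does not settle this, because the boundary path of $G(\mu)$ is in general not $y$-monotone. The fix is easy---connect every boundary sink of $G(\mu)$ directly to $v$ (which the \ishape permits), so that all outer-face augmentation edges share the endpoint $v$---but as written Stage~1 is incomplete. Your Stage~2 analysis, by contrast, is sound: after Stage~1 every subface of $f_1,f_2$ visible to an outside vertex still contains $v$, so its apex has level at least $\ell(v)$ and cannot be an inner vertex.
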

\begin{proof}
    Let~$\Gamma'$ and~$G'$ be an~$st$-augmentation of~$\Gamma$ and~$G$ where~$u, v$ is not a separation pair.
    Modify~$G', \Gamma'$ so that they remain an~$st$-augmentation of~$G, \Gamma$ and no edge in~$E_{st}$ has exactly one endpoint in~$V(G(\mu))$.
    Let~$f$ be a~$\mu$-incident arc in~$\Gamma$.
    Let~$E(f)$ denote the set of augmentation edges embedded into~$f$ to obtain~$\Gamma'$.
    Call an edge~$(w, x) \in E(f)$ \emph{critical} if~$w$ or~$x$ lies in~$V(\mu)$.
    Remove all critical edges from~$\Gamma'$ and~$G$.
    Note that because~$u, v$ is a separation pair in~$G$, the endpoints of all critical edges are now incident to the same face~$f'$.
    Observe that~$v$ is also incident to~$f'$.
    Consider a critical edge~$(w, x)$ that was removed.
    Because~$G(\mu)$ has \ishape, it follows from~$w \in V(\mu)$ that it is certainly~$\ell(w) < \ell(v)$.
    If it is~$w \not\in V(\mu)$, then it must be~$x \in V(\mu)$ and certainly~$\ell(x) < \ell(v)$.
    With~$\ell(w) < \ell(x)$ it follows that~$\ell(w) < \ell(v)$.
    So for each critical edge~$(w, x)$ the non-critical edge~$(w, v)$ can be added to~$\Gamma'$ and~$G'$.
    Because all endpoints are incident to~$f'$ and all inserted edges share the endpoint~$v$ this preserves the planarity of~$\Gamma'$ and~$G'$.
    Therefore,~$\Gamma'$ and~$G'$ is now an~$st$-augmentation of~$\Gamma$ and~$G$ once more.
    Finally,~$u$ and~$v$ separate~$V(G(\mu))$ from~$V \setminus V(G(\mu))$ in~$G'$ because~$G'$ contains no critical edge.
\end{proof}
\begin{figure}[t]
    \centering
    \includegraphics[width=\linewidth]{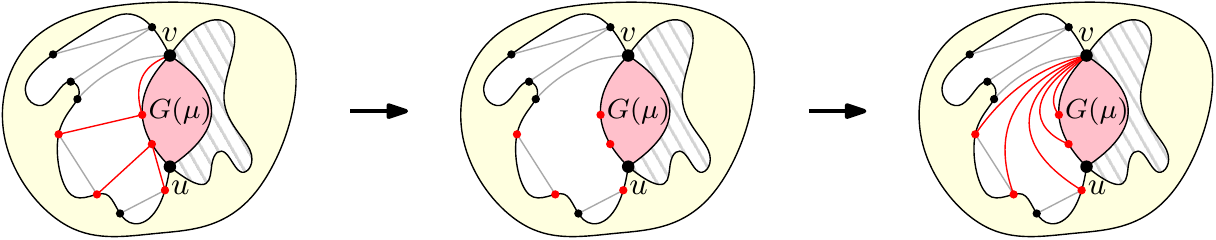}
    \caption{
        The three steps in the proof of Lemma~\ref{lem:i-shape-augmentation}.
        The subgraph~$G(\mu)$ is drawn in pink, the~$\mu$-incident face~$f$ is drawn in white.
        Critical augmentation edges in~$E(f)$ are drawn in red, and non-critical augmentation edges are drawn in gray.
        In the first step, remove all critical edges, this gives the drawing in the middle.
        Note that the red vertices and~$v$ are incident to a shared face.
        Finally, attach all red vertices to~$v$, this gives the drawing on the right.
        The same process would then be repeated for the other~$\mu$-incident face, drawn with gray stripes.
    }
    \label{fig:i-shape-augmentation}
\end{figure}

This sets the stage for the correctness proof.
The idea is to transform any given~$st$-augmentation to one that satisfies the conditions from Lemma~\ref{lem:i-shape-augmentation}.
Then the graphs corresponding to child virtual edges can be permuted arbitrarily while preserving planarity.
Lemma~\ref{lem:st-lp-is-p} then gives that all these embeddings are also level planar.
\begin{lemma}
    Let~$\Gamma$ be a level-planar embedding of~$G$ and let~$\mathcal T$ be a decomposition tree of~$G$ whose skeletons are embedded according to~$\Gamma$.
    Further, let~$\mu$ be a P-node of~$\mathcal T$.
    Let~$\Gamma'$ be the planar embedding obtained by arbitrarily permuting the child virtual edges of~$\mu$.
    Then~$\Gamma'$ is level planar.
\end{lemma}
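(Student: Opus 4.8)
The plan is to lift the permutation from~$G$ to a suitable~$st$-augmentation, in which the children of~$\mu$ become ordinary parallel components that may be reordered freely, and then to push the resulting embedding back down to~$G$. Write~$u, v$ for the poles of~$\mu$ and let~$e_1, \dots, e_m$ be its child virtual edges with~$\nu_i = \corr_\mu(e_i)$, so that the expansion graphs~$G(\nu_i)$ all have poles~$u, v$; since all children of P-nodes have~\ishape (Theorem~\ref{thm:p-node-i-shape}), each~$G(\nu_i)$ has~\ishape. Fix the target permutation~$\pi$ realised by~$\Gamma'$.

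First I would produce a single planar~$st$-augmentation~$G' = (V, E \cup E_{st})$ with an embedding~$\Gamma_{st}$ extending~$\Gamma$ in which~$\{u, v\}$ \emph{simultaneously} separates every~$V(G(\nu_i))$ from the remainder of~$V$. Starting from the augmentation of Lemma~\ref{lem:st-augmentation}, I would apply the construction behind Lemma~\ref{lem:i-shape-augmentation} to the children~$\nu_1, \dots, \nu_m$ one after another. The key point is that this construction only deletes augmentation edges incident to the interior of the child currently being processed and only introduces new augmentation edges of the form~$(w, v)$ with~$w$ in that interior. Since the interiors~$V(\nu_i)$ are pairwise disjoint and avoid the poles, an edge added while processing~$\nu_j$ is incident to the separator vertex~$v$ and therefore never joins the interior of a previously separated child~$\nu_i$ to anything outside~$G(\nu_i)$. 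Hence the separations accumulate, and every intermediate graph is again an~$st$-augmentation of~$G, \Gamma$.

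With this augmentation in hand, the subgraphs~$C_i = G'[V(G(\nu_i)) \cup \{u, v\}]$ are attached to the remainder of~$G'$ only at~$u$ and~$v$, and each joins~$u$ to~$v$; together with the component~$C_0$ carrying the parent side of~$\mu$ (and hence the edge~$(s, t)$), they are precisely the parallel split components of~$G'$ at the separation pair~$\{u, v\}$. Exactly as for the children of a P-node in an SPQR-tree, these components may be arranged in any order while preserving planarity, so reordering~$C_1, \dots, C_m$ according to~$\pi$ yields a planar embedding~$\widehat\Gamma_{st}$ of~$G'$. Because~$G'$ is an~$st$-graph, Lemma~\ref{lem:st-lp-is-p} shows that~$\widehat\Gamma_{st}$ is level planar. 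Deleting the edges of~$E_{st}$ from a level-planar drawing of~$\widehat\Gamma_{st}$ leaves a level-planar drawing of~$G$; by the separation property the augmentation edges lie inside individual components, so their removal restores each child's original embedding and leaves precisely~$\Gamma$ with the children of~$\mu$ reordered by~$\pi$ — that is,~$\Gamma'$. Thus~$\Gamma'$ is level planar.

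I expect the delicate step to be the simultaneous separation in the second paragraph: Lemma~\ref{lem:i-shape-augmentation} is phrased for a single node, so the real work is verifying that iterating its construction across all children does not reintroduce an edge crossing an already-established separation, which is exactly where the observation that every newly added edge is incident to the pole~$v$ does the heavy lifting. Everything afterward is the standard permutability of parallel components combined with Lemma~\ref{lem:st-lp-is-p}.
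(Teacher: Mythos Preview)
Your proof is correct and follows essentially the same route as the paper: augment to an~$st$-graph in which~$\{u,v\}$ separates every child of~$\mu$, permute the resulting parallel components, invoke Lemma~\ref{lem:st-lp-is-p}, and restrict back to~$G$. The paper's version is terser---it simply says the augmentation is ``obtained according to Lemma~\ref{lem:i-shape-augmentation}'' and asserts that~$(u,v)$ separates every~$G'(\nu)$, then passes through the SPQR-tree of~$G'$---whereas you spell out the iteration over the children and justify why processing one child cannot undo the separation of another (because every newly added edge has~$v$ as an endpoint); this is exactly the point the paper leaves implicit, so your version is if anything more complete.
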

\begin{proof}
    Let~$\Gamma'$ and~$G'$ be an~$st$-augmentation obtained from~$\Gamma$ and~$G$ according to Lemma~\ref{lem:i-shape-augmentation}.
    Note that~$(u, v)$ separates~$G'(\nu)$ from the rest of~$G'$ for each child~$\nu$ of~$\mu$.
    Consider the SPQR-tree~$\mathcal T'$ of~$G'$.
    Then~$u, v$ are the poles of a P-node~$\mu'$ in~$\mathcal T'$ with the same neighbors as~$\mu$ in~$\mathcal T$.
    Then the child virtual edges of~$\skel(\mu')$ can be arbitrarily permuted to obtain a planar embedding.
    Because~$G'$ is an~$st$-graph, Lemma~\ref{lem:st-lp-is-p} gives that any planar embedding of~$G'$ is also level planar.
\end{proof}

This completes the proof that in our decomposition the children of P-nodes can be arbitrarily permuted.
\begin{theorem}
    \label{thm:towards-correctness}
    Let~$G$ be a biconnected single-source graph with a unique apex.
    There exists a decomposition tree~$\mathcal T$ that
    \begin{enumerate*}[label=(\roman*)]
        \item represents all level-planar embeddings of~$G$ (plus some planar, non-level-planar ones), and
        \item if all skeletons of the nodes of~$\mathcal T$ are embedded so that contracting all arcs of~$\mathcal T$ yields a level-planar embedding, then the children of all P-nodes in~$\mathcal T$ can be arbitrarily permuted and then contracting all arcs of~$\mathcal T$ still yields a level-planar embedding of~$G$.
    \end{enumerate*}
\end{theorem}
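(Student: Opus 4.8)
The plan is to take for $\mathcal T$ exactly the decomposition tree produced by the iterated P-node split of Section~\ref{ssec:p-node-splits}, whose existence is guaranteed by Theorem~\ref{thm:p-node-i-shape}. That theorem already delivers part~(i): the tree represents all level-planar embeddings of $G$, and since its node skeletons may still be embedded freely (subject only to the tree structure, with R-node skeletons not yet pinned to a reference embedding), it may in addition represent some planar embeddings that are not level planar. These are precisely the ``plus some planar, non-level-planar'' embeddings in the statement, and they will be eliminated only in the second step of the construction.

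For part~(ii) I would fix an arbitrary choice of skeleton embeddings whose contraction yields a level-planar embedding $\Gamma$ of $G$, and argue that permuting the children of the P-nodes one node at a time preserves level planarity. Enumerate the P-nodes of $\mathcal T$ as $\mu_1, \dots, \mu_r$ and set $\Gamma_0 = \Gamma$. Given a level-planar embedding $\Gamma_{i-1}$ together with the skeleton embeddings that contract to it, apply the preceding lemma to the P-node $\mu_i$: an arbitrary permutation of the children of $\mu_i$ turns $\Gamma_{i-1}$ into a planar embedding $\Gamma_i$ that is again level planar. Crucially, reordering the child virtual edges of $\mu_i$ changes only the embedding of $\skel(\mu_i)$, so the skeleton embeddings still contract to $\Gamma_i$, and the hypothesis of the lemma is met afresh at $\mu_{i+1}$. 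After $r$ steps every P-node has been permuted independently, and $\Gamma_r$ is a level-planar embedding realizing the simultaneous permutation of all P-node children.

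Since the genuinely difficult content is already contained in the earlier lemmas, the main point requiring care here is the inductive bookkeeping, and this is the step I expect to be the only real obstacle. One must verify that after permuting $\mu_i$ the remaining nodes $\mu_{i+1}, \dots, \mu_r$ are still P-nodes of the same tree with the same poles and the same sets of children, so that the permutation lemma applies unchanged at each stage. This holds because P-node splits are performed only during the construction of $\mathcal T$ and never during the permutation process, and because permuting one P-node's child edges leaves the skeletons---and hence the identities, poles, and children---of all other nodes untouched. As each permutation is confined to a single skeleton, the local choices at distinct P-nodes are independent and commute, so the order of enumeration is immaterial; this is exactly the assertion that the children of all P-nodes may be permuted arbitrarily and simultaneously while still contracting to a level-planar embedding, establishing part~(ii).
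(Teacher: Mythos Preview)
Your proposal is correct and follows exactly the approach the paper intends: take the tree from Theorem~\ref{thm:p-node-i-shape} for part~(i), and for part~(ii) apply the permutation lemma iteratively, one P-node at a time. The paper in fact states Theorem~\ref{thm:towards-correctness} immediately after that lemma without a separate proof, leaving the induction over the P-nodes implicit; your write-up simply spells out that bookkeeping, and the observations you flag (that permuting one P-node's skeleton leaves all other skeletons, poles, and child sets unchanged, so the lemma's hypothesis is restored at each step) are precisely what is needed to make the iteration go through.
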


\subsection{Arc Processing}
\label{ssec:arc-processing}

In this section, we finish the construction of the LP-tree.
The basis of our construction is the decomposition tree~$\mathcal T$ from Theorem~\ref{thm:p-node-i-shape}, which represents a subset of the planar embeddings of~$G$ that contains all level-planar embeddings, and moreover all children of P-nodes have~\ishape.
We now restrict~$\mathcal T$ even further until it represents exactly the level-planar embeddings of~$G$.
As of now, all R-node skeletons have a planar embedding that is unique up to reflection, as they are either triconnected or consist of only three parallel edges.
By assumption,~$G$ is level-planar, and there exists a level-planar embedding~$\Gamma$ of~$G$.
Recall that our definition of level-planar embeddings involves demands.
Computing a level-planar embedding~$\Gamma$ of~$G$ with demands reduces to computing a level-planar embedding of the supergraph~$G'$ of~$G$ obtained from~$G$ by attaching to each vertex~$v$ of~$G$ with~$d(v) > \ell(v)$ an edge to a vertex~$v'$ with~$\ell(v') = d(v)$ without demands.
Because~$G'$ is a single-source graph whose size is linear in the size of~$G$ this can be done in linear time~\cite{dn-hpt-88}.
We equip the skeleton of each node~$\mu$ with the reference embedding~$\Gamma_\mu$ such that contracting all arcs yields the embedding~$\Gamma$.
For the remainder of this section we will work with the arc-based embedding representation.
%
As a first step, we contract any arc~$(\lambda,\mu)$ of~$\mathcal T$ where~$\lambda$ is an R-node and~$\mu$ is an S-node and label the resulting node as an R-node.
Note that, since S-nodes do not offer any embedding choices, this does not change the embeddings that are represented by~$\mathcal T$.
This step makes the correctness proof easier.
Any remaining arc~$(\lambda, \mu)$ of~$\mathcal T$ is contracted based upon two properties of~$\mu$, namely the height of~$G(\mu)$ and the space around~$\mu$ in the level-planar embedding~$\Gamma$, which we define next.
The resulting node is again labeled as an R-node.
Let~$\mu$ be a node of~$\mathcal T$ with poles~$u$ and~$v$.
We denote by~$\Gamma \circ \mu$ the embedding obtained from~$\Gamma$ by contracting~$G(\mu)$ to the single edge~$e=(u,v)$.
We call the faces~$f_1,f_2$ of~$\Gamma$ that induce the incident faces of~$e$ in~$\Gamma \circ \mu$ the \emph{$\mu$-incident faces}.
The \emph{space around~$\mu$ in~$\Gamma$} is~$\min\{\ell(\apex(f_1)),\ell(\apex(f_2))\}$; see Fig.~\ref{fig:height-space}.
\begin{figure}
    \centering
    \includegraphics{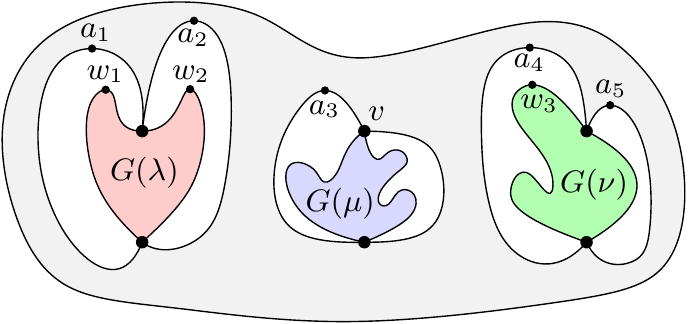}
    \caption{
        The height of~$G(\lambda)$ is at least~$\ell(w_1) = \ell(w_2)$, the height of~$G(\mu)$ is at most~$\ell(v) - 1$ and the height of~$G(\nu)$ is at least~$\ell(w_3)$.
        The space around~$\lambda$ is~$\ell(a_1)$, the space around~$\mu$ is~$\ell(v)$ and the space around~$\nu$ is~$\ell(a_5)$.
    }
    \label{fig:height-space}
\end{figure}
For the time being we will consider the embeddings of P-node skeletons as fixed.
Then all the remaining embedding choices are done by choosing whether or not to flip the embedding for the incoming arc of each R-node.
Let~$A$ denote the set of arcs in~$\mathcal T$.
For each arc~$a = (\lambda,\mu) \in A$ let~$\spc(\mu)$ denote the space around~$\mu$ in~$\Gamma$.
We label~$a$ as \emph{rigid} if~$d(\mu) \ge \spc(\mu)$ and as \emph{flexible} otherwise.

Let~$\mathcal T'$ be the decomposition tree obtained by contracting all rigid arcs and equipping each R-node skeleton with the reference embedding obtained from the contractions.
We now release the fixed embedding of the P-nodes, allowing to permute their children arbitrarily.
The resulting decomposition tree is called the \emph{LP-tree} of the input graph~$G$.
See Fig.~\ref{fig:lp-tree}~(d) for an example.
\begin{figure}
    \centering
    \includegraphics[width=\linewidth]{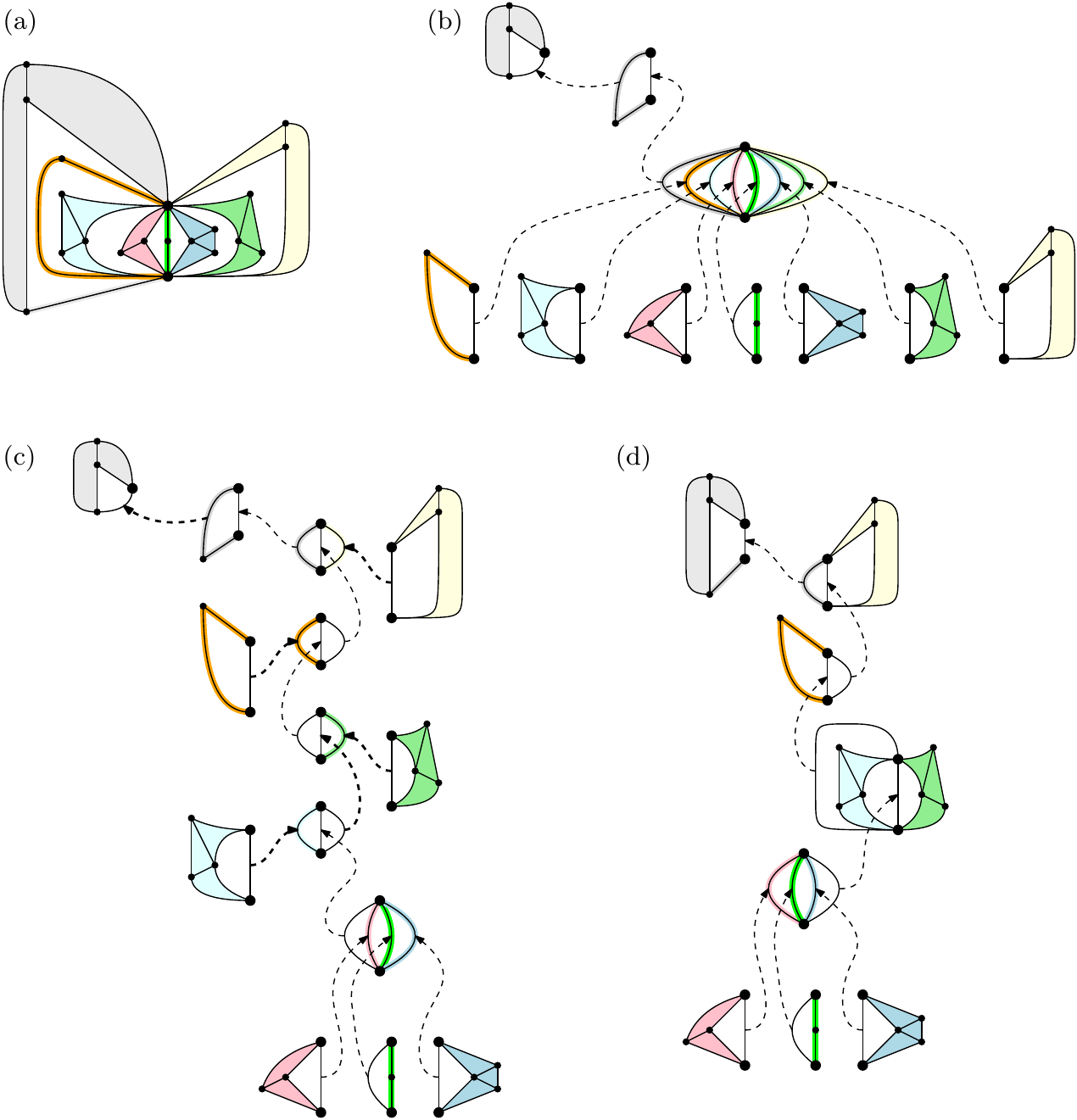}
    \caption{
        Example construction of the LP-tree for the graph~$G$~(a).
        We start with the SPQR-tree of~$G$~(b).
        Arcs are oriented towards the root.
        Next, we split the P-node, obtaining the tree shown in (c).
        Finally, we contract arcs that connect R-nodes with S-nodes and arcs that are found to be rigid (thick dashed lines).
        This gives the final LP-tree~$\mathcal T$ for~$G$~(d).
    }
    \label{fig:lp-tree}
\end{figure}
Our main result is the following theorem.

\begin{theorem}
    Let~$G$ be a biconnected, single-source, level-planar graph.
    The LP-tree of~$G$ represents exactly the level-planar embeddings of~$G$ and can be computed in linear time.
    \label{thm:correctness}
\end{theorem}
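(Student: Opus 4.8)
The plan is to prove Theorem~\ref{thm:correctness} in two parts, corresponding to its two claims: correctness of the representation, and the linear running time. The correctness argument itself splits naturally into the two inclusions, namely that the LP-tree represents \emph{only} level-planar embeddings and that it represents \emph{all} of them.

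First I would address soundness, i.e., that every embedding represented by the LP-tree is level planar. The starting point is Theorem~\ref{thm:towards-correctness}, which hands us a decomposition tree~$\mathcal T$ representing all level-planar embeddings (plus possibly some non-level-planar ones) in which all children of P-nodes have~\ishape, together with the key permutation property~(ii): once the skeletons are embedded so that contracting yields a level-planar embedding, arbitrarily permuting the children of any P-node preserves level planarity. The LP-tree is obtained from this~$\mathcal T$ by contracting exactly the \emph{rigid} arcs~$(\lambda,\mu)$, those with~$d(\mu) \ge \spc(\mu)$. The crux of soundness is to argue that contracting a rigid arc loses no level-planar embedding choice, because flipping the expansion graph~$G(\mu)$ across a rigid arc can never yield a level-planar embedding: the reference embedding~$\Gamma$ is level planar by construction, and when~$d(\mu)\ge\spc(\mu)$, some inner vertex~$w$ of~$G(\mu)$ realizing the height is hemmed in by the~$\mu$-incident faces whose apices both lie at or below its demand, so by Lemma~\ref{lem:level-planarity-characterisation} no face witnessing~$w$ as a non-apex survives a flip. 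Hence fixing the embedding of~$G(\mu)$ to its reference discards only non-level-planar embeddings. Conversely, for the retained (flexible) R-node arcs and the released P-nodes, every representable embedding is level planar: P-node permutations are safe by Theorem~\ref{thm:towards-correctness}(ii), and flexible R-node flips are precisely those where~$\spc(\mu) > d(\mu)$ leaves enough vertical room for either orientation, which I would verify again via Lemma~\ref{lem:level-planarity-characterisation}.

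Second I would prove completeness, that every level-planar embedding~$\Lambda$ of~$G$ is represented. Since~$\mathcal T$ from Theorem~\ref{thm:towards-correctness} already represents all level-planar embeddings, it suffices to show that~$\Lambda$ survives the rigid-arc contractions, i.e., that~$\Lambda$ agrees with the reference~$\Gamma$ on the orientation of every rigid arc. This is the natural dual of the soundness argument: if~$\Lambda$ differed from~$\Gamma$ in the flip of some rigid arc~$(\lambda,\mu)$, then the above obstruction (the trapped vertex~$w$ together with Lemma~\ref{lem:level-planarity-characterisation}) would show~$\Lambda$ is not level planar, a contradiction. Here I must be careful that the notions of space and height are \emph{intrinsic} enough: the space~$\spc(\mu)$ is defined relative to~$\Gamma$, so I would argue that for rigid arcs the relevant apex bounds hold in \emph{any} level-planar embedding, using that the~\ishape{} property forces~$G(\mu)$'s inner vertices to sit between the poles and that~$u,v$ separate~$V(G(\mu))$ from the rest (as in Lemma~\ref{lem:i-shape-augmentation}). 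This ensures the rigidity of an arc is a genuine property of the graph and level assignment, not an artifact of the chosen reference embedding.

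Finally, for the linear-time bound I would account for each construction stage: the SPQR-tree is computed in linear time by the cited algorithms; labeling nodes and the initial rooting at~$(s,t)$ are linear. The P-node splits of Section~\ref{ssec:p-node-splits} require the heights~$d(\mu)$, which can be computed by a single bottom-up pass over the tree accumulating demands, and each split is a local constant-size operation whose total cost is bounded by the tree size. Computing the reference embedding~$\Gamma$ reduces, as noted in the excerpt, to a single-source level-planarity computation on a linear-size augmentation, done in linear time via~\cite{dn-hpt-88}. The spaces~$\spc(\mu)$ for all nodes can be read off from~$\Gamma$ in one traversal by recording, for each node, the apex levels of its two incident faces; classifying arcs as rigid or flexible is then a constant-time comparison per arc. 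The rigid-arc contractions are linear in total because each contraction merges two skeletons in time proportional to their combined size and the skeletons are edge-disjoint over the tree. I expect the main obstacle to be the completeness direction, specifically establishing that rigidity is embedding-independent so that the reference~$\Gamma$ does not privilege one family of embeddings over another; the soundness and timing arguments are comparatively mechanical once Lemma~\ref{lem:level-planarity-characterisation} is in hand.
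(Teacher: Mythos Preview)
Your overall shape is right, and the linear-time accounting matches the paper's. The genuine gap is exactly where you flag it: the embedding-independence of~$\spc(\mu)$. Your proposed fix does not work. You write that you would use ``that the~\ishape property forces~$G(\mu)$'s inner vertices to sit between the poles'' and invoke Lemma~\ref{lem:i-shape-augmentation}. But rigid arcs are precisely those with~$d(\mu)\ge\spc(\mu)$, and the paper observes that no arc incident to a P-node is ever rigid; rigid arcs connect two R-nodes, and such a~$\mu$ typically does \emph{not} have~\ishape. Moreover,~$\spc(\mu)$ is determined by the apices of the two~$\mu$-incident faces, which depend on the embedding of~$G$ \emph{outside}~$G(\mu)$, so no property of~$G(\mu)$ alone can pin it down.

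The paper resolves this by replacing your ``treat each arc independently'' picture with a top-down induction over the arcs~$\alpha_1,\dots,\alpha_m$. It defines restricted decomposition trees~$\mathcal T_i$ (only arcs in~$F_i$ may flip) and proves an invariant (Lemma~\ref{lem:decomposition-tree-invariant}) with five parts. Condition~\ref{itm:decomposition-tree-invariant-condition-four} is the crucial one you are missing: the space around every node is identical across \emph{all} embeddings represented by~$\mathcal T_i$. This is proved by a case analysis showing that flipping a flexible arc (where~$d(\mu)<\spc(\mu)$ forces the apices~$a_1,a_2$ of the~$\mu$-incident faces to lie outside~$G(\mu)$) leaves the apex of every face unchanged, and that P-node permutations can be simulated by a sequence of such flexible flips. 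With condition~\ref{itm:decomposition-tree-invariant-condition-four} in hand, the rigid/flexible classification made relative to the reference~$\Gamma$ is automatically valid for every embedding reached so far, and your soundness and completeness arguments (conditions~\ref{itm:decomposition-tree-invariant-condition-one}--\ref{itm:decomposition-tree-invariant-condition-three} and~\ref{itm:decomposition-tree-invariant-condition-five}) then go through essentially as you sketch, via Lemma~\ref{lem:level-planarity-characterisation}. Without this invariant, however, both directions break: a flip that is safe in~$\Gamma$ might not be safe after other flexible flips have been applied, and conversely an arc that looks rigid in~$\Gamma$ might look flexible from some other level-planar embedding.
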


The next subsection is dedicated to proving the correctness of Theorem~\ref{thm:correctness}.
The above algorithm considers every arc of~$\mathcal T$ once.
The height of~$\mu$ and the space around~$\mu$ in~$\Gamma$ can be computed in polynomial time.
Thus, the algorithm has overall polynomial running time.
In Section~\ref{ssec:implementation-in-linear-time}, we present a linear-time implementation of this algorithm.

\subsection{Correctness}
\label{sssec:correctness-arc-processing}

Process the arcs in top-down order~$\alpha_1,\dots,\alpha_m$.
For~$i = 0, 1, \dots, m$ let the set~$A_i = \{\alpha_1,\dots,\alpha_i\}$ contain the first~$i$ processed arcs for~$i=0,\dots,m$.
Note that~$A_0 = \emptyset$ and~$A_m = A$.
Denote by~$R_i$ and~$F_i$ the arcs in~$A_i$ that are labeled rigid and flexible, respectively.
We now introduce a refinement of the embeddings represented by a decomposition tree.
Namely, a \emph{restricted decomposition tree}~$\mathcal T$ is a decomposition tree together with a subset of its arcs that are labeled as flexible, and, in the arc-based view, the embeddings represented by~$\mathcal T$ are only those that can be created by flipping only at flexible arcs.
We denote by~$\mathcal T_i$ the restricted decomposition tree obtained from~$\mathcal T$ by marking only the edges in~$F_i$ as flexible.

Initially,~$F_0 = \emptyset$, and therefore~$\mathcal T$ represents exactly the reference embedding~$\Gamma_\mathrm{ref}$ and its reflection.
Since all children of~$P$-nodes have \ishape and each P-node has \ishape, no arc incident to a P-node is labeled \emph{rigid}.
Therefore, if such an edge is contained in~$A_i$, it is flexible.
In particular, only arcs between adjacent R-nodes are labeled rigid.
As we proceed and label more edges as \emph{flexible}, more and more embeddings are represented.
Each time, we justify the level planarity of these embeddings.
As a first step, we extend the definition of space from the previous subsection, which strongly depends on the initial level-planar embedding~$\Gamma$, in terms of all level-planar embeddings represented by the restricted decomposition tree~$\mathcal T_i$.
Let~$\mu$ be a node of~$\mathcal T_i$ with poles~$u, v$.
The \emph{space around~$\mu$} is the minimum space around~$\mu$ in any level-planar embedding represented by the restricted decomposition tree~$\mathcal T_i$.
Now let~$\Gamma$ be a planar embedding of~$G$ and let~$\Pi$ be a planar embedding of~$G(\mu)$ where~$u$ and~$v$ lie on the outer face.
Because~$u$ and~$v$ is a separation pair that disconnects~$G(\mu)$ from the rest of~$G$ and~$G(\mu)$ is connected, the embedding of~$G(\mu)$ in~$\Gamma$ can be replaced by~$\Pi$.
Let~$\Gamma + \Pi$ refer to the resulting embedding.
Now let~$\Gamma$ be a planar embedding of~$G$ and let~$\mu$ be a node of~$\mathcal T$.
Let~$\Pi$ denote the restriction of~$\Gamma$ to~$G(\mu)$ and let~$\bar\Pi$ be the reflection of~$\Pi$.
\emph{Reflecting}~$\mu$ in~$T$ corresponds to replacing~$\Pi$ by~$\bar \Pi$ in~$\Gamma$, obtaining the embedding~$\Gamma + \bar\Pi$ of~$G$.

The idea is to show that if there is (is not) enough space around a node~$\mu$ to reflect it, it can (cannot) be reflected regardless of which level-planar embedding is chosen for~$G(\mu)$.
So, the algorithm always labels arcs correctly.
We use the following invariant.

\begin{lemma}
    The restricted decomposition tree~$\mathcal T_i$ satisfies the following five conditions.
    \begin{enumerate}[topsep=0pt,itemsep=0pt]
        \item \label{itm:decomposition-tree-invariant-condition-one}
              All embeddings represented by~$\mathcal T_i$ are level planar.
        \item \label{itm:decomposition-tree-invariant-condition-two}
              Let~$(\lambda, \mu)$ be an arc that is labeled as flexible.
              Let~$\Gamma$ be an embedding represented by~$\mathcal T_{i - 1}$ and let~$\Pi$ be any level-planar embedding of~$G(\mu)$.
              Then~$\Gamma + \Pi$ and~$\Gamma + \bar\Pi$ are level planar.
        \item \label{itm:decomposition-tree-invariant-condition-three}
              Let~$(\lambda, \mu)$ be an arc that is labeled as rigid.
              Let~$\Gamma$ be an embedding represented by~$\mathcal T_{i - 1}$ and let~$\Pi$ be a level-planar embedding of~$G(\mu)$ so that~$\Gamma + \Pi$ is level planar.
              Let all skeletons of~$\mathcal T_i$ be embedded according to~$\Gamma + \Pi$.
              Then~$\skel(\mu)$ has the reference embedding and~$\Gamma + \bar\Pi$ is not level planar.
        \item \label{itm:decomposition-tree-invariant-condition-four}
              The space around each node~$\mu$ of~$\mathcal T_i$ is the same across all embeddings represented by~$\mathcal T_i$.
        \item \label{itm:decomposition-tree-invariant-condition-five}
              Let~$\Gamma$ be a level-planar embedding of~$G$ so that there exists a level-planar embedding~$\Gamma_p$ of~$G$ that
              \begin{enumerate*}[label=(\roman*)]
                  \item is obtained from~$\Gamma$ by reordering the children of P-nodes, and
                  \item satisfies~$\Gamma_p = \Gamma_\mathrm{ref}(\pi_1, \pi_2, \ldots, \pi_m)$ where~$\pi_j$ indicates whether arc~$\alpha_j = (\lambda_j, \mu_j)$ should be flipped ($\pi_j = \bar\alpha_j$) or not~($\pi_j = \alpha_j$), and it is~$\pi_j = \alpha_j$ for~$j > i$.
              \end{enumerate*}
              Then~$\Gamma$ is represented by~$\mathcal T_i$.
    \end{enumerate}
    \label{lem:decomposition-tree-invariant}
\end{lemma}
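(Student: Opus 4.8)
The plan is to prove all five conditions simultaneously by induction on the number $i$ of processed arcs. For the base case $i = 0$ we have $F_0 = \emptyset$, so $\mathcal T_0$ represents only the reference embedding $\Gamma_\mathrm{ref} = \Gamma$, its reflection, and the reorderings of P-node children. Condition~\ref{itm:decomposition-tree-invariant-condition-one} holds because $\Gamma$ is level planar by choice; conditions~\ref{itm:decomposition-tree-invariant-condition-two} and~\ref{itm:decomposition-tree-invariant-condition-three} are vacuous since $A_0 = \emptyset$; condition~\ref{itm:decomposition-tree-invariant-condition-four} holds because reflection and the permutation of \ishape children (Theorem~\ref{thm:towards-correctness}) leave the apex level of every face unchanged; and condition~\ref{itm:decomposition-tree-invariant-condition-five} holds because here $\Gamma_p = \Gamma_\mathrm{ref}$, so $\Gamma$ is a P-node reordering of $\Gamma_\mathrm{ref}$ and thus represented by $\mathcal T_0$.

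For the inductive step I process $\alpha_i = (\lambda_i, \mu_i)$ with poles $u, v$ and write $d = d(\mu_i)$ and $s = \spc(\mu_i)$; the latter is well defined by condition~\ref{itm:decomposition-tree-invariant-condition-four} for $\mathcal T_{i-1}$. The heart of the argument is the dichotomy in conditions~\ref{itm:decomposition-tree-invariant-condition-two} and~\ref{itm:decomposition-tree-invariant-condition-three}, which I establish through Lemma~\ref{lem:level-planarity-characterisation}. In the flexible case $d < s$, I would argue that every vertex $w$ internal to $G(\mu_i)$ satisfies $d(w) \le d < s \le \ell(\apex(f))$ for both $\mu_i$-incident faces $f$; combined with the internal level planarity of $\Pi$ (respectively $\bar\Pi$), this shows via Lemma~\ref{lem:level-planarity-characterisation} that each such $w$ is incident to a face of which it is not an apex, so both orientations yield level-planar embeddings. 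In the rigid case $d \ge s$, I would single out the tall internal vertex $w$ with $d(w) = d \ge s$ and show that the $\mu_i$-incident face realizing the space $s$ blocks it: reflecting $G(\mu_i)$ forces $w$ to border only faces whose apex lies at level at most $d(w)$, so $w$ becomes an apex of every incident face and Lemma~\ref{lem:level-planarity-characterisation} is violated; hence $\Gamma + \bar\Pi$ is not level planar and the reference orientation is forced. I expect this geometric step to be the main obstacle, as it requires controlling exactly which faces the critical vertex can touch after reflection and arguing that the blocking side is the one realizing the space $s$; here I would lean on a separation argument in the spirit of Lemma~\ref{lem:i-shape-augmentation}.

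The remaining conditions are consequences. Condition~\ref{itm:decomposition-tree-invariant-condition-one} for $\mathcal T_i$ follows from the inductive hypothesis together with condition~\ref{itm:decomposition-tree-invariant-condition-two}, since every newly represented embedding differs from one represented by $\mathcal T_{i-1}$ only by a flip at the now-flexible arc $\alpha_i$; if $\alpha_i$ is labeled rigid, no new embedding is added and the condition is inherited. For condition~\ref{itm:decomposition-tree-invariant-condition-four} I would show that a flip at any flexible arc only relocates vertices of height below the relevant space, which are too low to become the apex that determines the space of any other node; thus the apices of all $\mu$-incident faces are fixed by the unflipped frame, and space invariance is preserved.

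Finally, for the completeness condition~\ref{itm:decomposition-tree-invariant-condition-five}, I consider a level-planar $\Gamma$ whose P-reordered version is $\Gamma_p = \Gamma_\mathrm{ref}(\pi_1, \ldots, \pi_m)$ with $\pi_j = \alpha_j$ for $j > i$. If $\pi_i = \alpha_i$, the inductive condition~\ref{itm:decomposition-tree-invariant-condition-five} for $\mathcal T_{i-1}$ applies and $\Gamma$ is represented by $\mathcal T_{i-1}$, hence by $\mathcal T_i$. If $\pi_i = \bar\alpha_i$, I must argue that $\alpha_i$ is flexible. Using the space invariance of condition~\ref{itm:decomposition-tree-invariant-condition-four}, the space around $\mu_i$ equals $s$ in the context of $\Gamma_p$ as well, so if $\alpha_i$ were rigid the trapping argument above would apply verbatim to $\Gamma_p$ and contradict its level planarity; hence $\alpha_i \in F_i$. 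Then the flip $\pi_i = \bar\alpha_i$ is realizable by $\mathcal T_i$, and $\Gamma_p$ — and therefore $\Gamma$ by reordering P-node children, which is permitted by Theorem~\ref{thm:towards-correctness} — is represented by $\mathcal T_i$. The subtlety I would have to handle carefully here is precisely this transfer of the rigidity argument from $\Gamma_\mathrm{ref}$ to the context of $\Gamma_p$, for which condition~\ref{itm:decomposition-tree-invariant-condition-four} is exactly the tool needed to avoid a circular appeal to the level planarity of the un-flipped embedding.
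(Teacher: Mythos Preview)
Your proposal is correct and follows essentially the same inductive route as the paper: the case split $d(\mu_i) < \spc(\mu_i)$ versus $d(\mu_i) \ge \spc(\mu_i)$, with Lemma~\ref{lem:level-planarity-characterisation} driving both the flexible and the rigid direction, and conditions~\ref{itm:decomposition-tree-invariant-condition-one},~\ref{itm:decomposition-tree-invariant-condition-four},~\ref{itm:decomposition-tree-invariant-condition-five} derived from these as you outline. One small redirection: in the rigid case the paper does not appeal to Lemma~\ref{lem:i-shape-augmentation} but instead uses that a rigid arc necessarily targets an R-node, whose biconnected skeleton forces the tall vertex $w$ to lie on exactly one of the two $\mu_i$-incident faces --- this is the ingredient you flag as the main obstacle --- and for condition~\ref{itm:decomposition-tree-invariant-condition-five} the paper explicitly constructs the un-flipped auxiliary embedding $\Gamma_p^1 = \Gamma_{\mathrm{ref}}(\pi_1,\dots,\pi_{i-1},\alpha_i,\alpha_{i+1},\dots,\alpha_m)$, shows it is represented by~$\mathcal T_{i-1}$ via induction, and then invokes condition~\ref{itm:decomposition-tree-invariant-condition-three} directly rather than re-running the geometric argument.
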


\begin{proof}
    For~$i = 0$, no arc of the restricted decomposition tree~$\mathcal T_0$ is labeled as flexible.
    So~$\mathcal T_0$ only represents the reference embedding~$\Gamma_\mathrm{ref}$ and its reflection~$\bar\Gamma_\mathrm{ref}$.
    Both of these are level planar by assumption, so condition~\ref{itm:decomposition-tree-invariant-condition-one} is satisfied.
    Because~$A_0 = \emptyset$, no arc has been labeled as flexible or rigid, so conditions~\ref{itm:decomposition-tree-invariant-condition-two} and~\ref{itm:decomposition-tree-invariant-condition-three} are trivially satisfied.
    Because the incidences of vertices and faces are the same in~$\Gamma$ and its reflection~$\bar\Gamma$, condition~\ref{itm:decomposition-tree-invariant-condition-four} is also satisfied.

    Now consider the case~$i \geq 1$.
    Let~$\alpha_i = (\lambda, \mu)$.
    Let~$u, v$ be the poles of~$\mu$.
    Let~$\Gamma$ be an embedding represented by~$\mathcal T_{i - 1}$ and let~$\Pi$ be any level-planar embedding of~$G(\mu)$.
    Consider the embedding~$\Gamma + \Pi$.
    Let~$f_1, f_2$ be the~$\mu$-incident faces of~$\Gamma + \Pi$.
    For~$j = 1, 2$, let~$W_j$ be the subset of vertices of~$G(\mu)$ that are incident to~$f_j$, except for~$u$ and~$v$.
    And let~$V_j$ be all other vertices incident to~$f_j$, including~$u$ and~$v$.
    Now consider the embedding~$\Gamma + \bar\Pi$.
    Again, let~$f_1', f_2'$ be the~$\mu$-incident faces of~$\Gamma + \bar\Pi$.
    Then~$V_1 \cup W_2$ and~$V_2 \cup W_1$ are the set of vertices incident to~$f_1'$ and~$f_2'$, respectively.
    Note that all faces in~$\Gamma + \bar\Pi$ except for~$f_1', f_2'$ appear identically in~$\Gamma + \Pi$.
    Let~$a_1$ and~$a_2$ denote the apices of~$f_1$ and~$f_2$, respectively.
    Then the space around~$\mu$ in~$\Gamma + \Pi$, denoted by~$\spc(\mu)$, is~$\min(\ell(a_1), \ell(a_2))$.
    Distinguish two cases, namely~$\height(\mu) < \spc(\mu)$ and~$\height(\mu) \ge \spc(\mu)$.
    Note that because of condition~\ref{itm:decomposition-tree-invariant-condition-four}, the same case applies for any embedding represented by~$\mathcal T_{i - 1}$.

    \begin{enumerate}
        \item Consider the case~$\height(\mu) < \spc(\mu)$.
              This implies~$a_1 \in V_1$ and~$a_2 \in V_2$.
              We have to show that both~$\Gamma + \Pi$ and~$\Gamma + \bar\Pi$ are level planar.
              To this end, use Lemma~\ref{lem:level-planarity-characterisation}.
              By assumption,~$\Pi$ is a level-planar embedding of~$G(\mu)$.
              So the condition of Lemma~\ref{lem:level-planarity-characterisation} is satisfied for any vertex of~$G$ whose incident faces are all inner faces of~$\Pi$ in~$\Gamma + \Pi$ (or of~$\bar\Pi$ in~$\Gamma + \bar\Pi$).
              By condition~\ref{itm:decomposition-tree-invariant-condition-one},~$\Gamma$ is a level-planar embedding of~$G$.
              So the condition of Lemma~\ref{lem:level-planarity-characterisation} is satisfied for any vertex of~$G \setminus G(\mu)$ that is not incident to~$f_1$ and~$f_2$.
              It remains to be shown that the condition of Lemma~\ref{lem:level-planarity-characterisation} is satisfied for the vertices in~$(V_1 \setminus \{a_1\} \cup (V_2 \setminus \{a_2\}) \cup W_1 \cup W_2$.

              \begin{itemize}
                  \item Suppose~$w \in V_1 \setminus \{a_1\}$.
                        Then~$w$ is incident to~$f_1'$, as are the vertices in~$V_1$.
                        In particular, because~$a_1 \in V_1$, the apex~$a_1$ is incident to~$f_1'$.
                        And because~$a_1$ is the unique apex of~$f_1$, it is~$\ell(w) < \ell(a_1)$.
                        The argument works analogously for~$w \in V_2 \setminus \{a_2\}$.
                  \item Otherwise, it is~$w \in W_1$.
                        \begin{itemize}
                            \item Consider~$\Gamma + \Pi$.
                                  Then~$w$ is incident to~$f_1'$, as are the vertices in~$V_1$.
                                  In particular,~$a_1$ is incident to~$f_1'$.
                                  Note that it is~$\spc(\mu) = \min(\ell(a_1), \ell(a_2))$.
                                  So it is
                                  \[
                                      \ell(w) \leq \height(\mu) < \spc(\mu) \leq \ell(a_1)
                                  \]
                                  and it follows that~$\ell(w) < \ell(a_1)$.
                            \item Consider~$\Gamma + \bar\Pi$.
                                  Then~$w$ is incident to~$f_2'$, as are the vertices in~$V_2$.
                                  In particular,~$a_2$ is incident to~$f_2'$.
                                  Note that it is~$\spc(\mu) = \min(\ell(a_1), \ell(a_2))$.
                                  So it is
                                  \[
                                      \ell(w) \leq \height(\mu) < \spc(\mu) \leq \ell(a_2)
                                  \]
                                  and it follows that~$\ell(w) < \ell(a_2)$.
                        \end{itemize}
                        The argument works analogously for~$w \in W_2$.
              \end{itemize}

              This shows that the condition in Lemma~\ref{lem:level-planarity-characterisation} is satisfied for all vertices in~$\Gamma + \Pi$ and~$\Gamma + \bar\Pi$.
              As a result, both of these embeddings are level planar.

        \item Consider the case~$\height(\mu) \ge \spc(\mu)$.
              Then the algorithm will find the arc~$\alpha_i$ to be rigid and we have to show that this is the correct choice.
              Note that as observed above, the fact that~$\alpha_i$ is labeled as rigid means that~$\mu$ is an R-node.
              Recall that~$u, v$ are the poles of~$\mu$ and let~$w \neq u, v$ be a vertex of~$G(\mu)$ so that~$\ell(w)$ equals~$\height(\mu)$.
              Note that it is~$w \neq v$ by definition of~$\height(\mu)$ and~$w \neq u$ because of~$\height(\mu) \ge \spc(\mu)$.
              Again, because of~$\height(\mu) \ge \spc(\mu)$, the apex~$w$ lies on the outer face of~$\Pi$.
              Either~$w$ is a vertex on the outer face of~$\skel(\mu)$, or~$w$ belongs to~$G(e)$ for some child virtual edge~$e$ on the outer face of~$\skel(\mu)$.
              Because~$\mu$ is an R-node, its skeleton is biconnected and therefore~$w$ is incident to either~$f_1$ or~$f_2$, but not both, and this choice depends entirely on the embedding of~$\skel(\mu)$.
              By assumption~$\Gamma + \Pi$ is level planar and it remains to be shown that~$\Gamma + \bar\Pi$, is not level planar.
              Note that~$\Gamma + \bar\Pi$ is the embedding that is obtained by reflecting~$\mu$ so that~$\skel(\mu)$ does not have the reference embedding.
              Assume~$w \in W_1$ without loss of generality.
              It is~$\ell(w) = \max\{\ell(x) \mid x \in W_1\}$.
              Because~$w$ is an apex of~$V(\mu)$, face~$f_1$ must be the face incident to~$w$ of which~$w$ is not an apex.
              Now consider~$\Gamma + \bar\Pi$.
              Now~$w$ is incident to face~$f_2'$ which is incident to the vertices~$V_2 \cup W_1$.
              Because~$\height(\mu) \ge \spc(\mu)$ it is~$\ell(w) \geq \max\{w \in V_2 \cup W_1\}$.
              This means that~$w$ is an apex of all its incident faces.
              Then~$\Gamma + \bar\Pi$ cannot be level planar by Lemma~\ref{lem:level-planarity-characterisation}.
    \end{enumerate}

    This means that if~$a$ is labeled as flexible, then~$G(\mu)$ can be reflected in all embeddings represented by~$\mathcal T_{i - 1}$.
    And if~$a$ is labeled as rigid, then~$G(\mu)$ cannot be reflected in any embedding represented by~$\mathcal T_{i - 1}$.
    This shows that~$\mathcal T_i$ satisfies conditions~\ref{itm:decomposition-tree-invariant-condition-one} through~\ref{itm:decomposition-tree-invariant-condition-three}.
    Next, we show that the space around nodes of~$\mathcal T$ is the same across all embeddings represented by~$\mathcal T_i$.
    Once again, distinguish the two cases~$\height(\mu) < \spc(\mu)$ and~$\height(\mu) \ge \spc(\mu)$.

    \begin{enumerate}
        \item Consider the case~$\height(\mu) < \spc(\mu)$.
              Let~$\Gamma$ be an embedding represented by~$\mathcal T_{i - 1}$ and let~$\Gamma'$ be the embedding obtained by reflecting~$\mu$ in~$\Gamma$.
              See Fig.~\ref{fig:space-no-change}.
              We show that the space around each node~$\nu$ of~$\mathcal T_i$ is identical in~$\Gamma$ and~$\Gamma'$.
              Let~$x, y$ be the poles of~$\nu$ and let~$f_1, f_2$ be the~$\mu$-incident faces in~$\Gamma$.
              Further, let~$f_1', f_2'$ be the~$\mu$-incident faces in~$\Gamma'$.
              As previously discussed, all faces in~$\Gamma$ and~$\Gamma'$ are identical, except for~$f_1, f_2, f_1', f_2'$.
              Suppose that both~$\nu$-incident faces in~$\Gamma$ are neither~$f_1$ nor~$f_2$.
              Then the faces around~$\nu$ do not change and therefore the space around~$\nu$ does not change.
              Conversely, suppose that the~$\nu$-incident faces are~$f_1$ and~$f_2$.
              Then the space around~$\nu$ in~$\Gamma$ is~$\min(\ell(a_1), \ell(a_2))$.
              And because~$a_j \in V_j$ for~$j = 1, 2$, the space around~$\nu$ in~$\Gamma'$ is~$\min(\ell(a_1), \ell(a_2))$ as well.

              Otherwise, exactly one~$\nu$-incident face in~$\Gamma$ is either~$f_1$ or~$f_2$.
              Without loss of generality, let~$f_1$ be that face.
              Then exactly one~$\nu$-incident face in~$\Gamma'$ is either~$f_1'$ or~$f_2'$.
              Assume that face is~$f_1'$.
              Because the apex of~$f_1'$ and~$f_1$ are identical, the space around~$\nu$ in~$\Gamma'$ is the same as in~$\Gamma$.
              Now assume that~$f_2'$ is the face.
              Then the space around~$\nu$ is bounded by a vertex~$z \in V(G(\mu))$ and~$\height(\mu) < \spc(\mu)$ implies that~$\ell(z) \leq \height(\mu) < \spc(\mu)$.
              So the space around~$\nu$ is bounded by~$z$ in~$\Gamma$ and~$\Gamma'$.

              Again, because of condition~\ref{itm:decomposition-tree-invariant-condition-four}, the argument can be made for any embedding represented by~$\mathcal T_{i - 1}$, and therefore the claim follows for all embeddings represented by~$\mathcal T_i$.
        \item Consider the case~$\height(\mu) \ge \spc(\mu)$.
              Then~$\mathcal T_i$ represents the same embeddings as~$\mathcal T_{i - 1}$ and so condition~\ref{itm:decomposition-tree-invariant-condition-four} is trivially satisfied.
    \end{enumerate}

    \begin{figure}[t]
        \centering
        \includegraphics{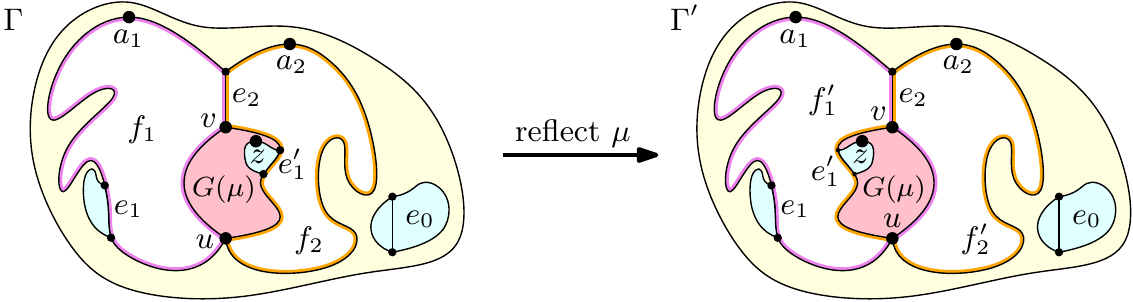}
        \caption{
            Proof of Lemma~\ref{lem:decomposition-tree-invariant}, Property~\ref{itm:decomposition-tree-invariant-condition-four} for the case~$\height(\mu) < \spc(\mu)$.
            The edge~$e_0$ is not incident to any~$\mu$-incident face, the edges~$e_1, e_1'$ are incident to exactly one~$\mu$-incident face and the edge~$e_2$ is incident to both~$\mu$-incident faces.
            The space around all nodes of~$\mathcal T$ does not change when reflecting~$\mu$.
        }
        \label{fig:space-no-change}
    \end{figure}

    Now we show that permuting the children of P-nodes does not change the space around any node of~$\mathcal T_i$.
    Recall Theorem~\ref{thm:p-node-i-shape}, which states that all children of P-nodes have \ishape.
    Take any two adjacent children of a P node~$\mu$ and merge them, creating a new R-node child~$\nu$ of the P-node.
    Then~$\nu$ has \ishape.
    Therefore it can be reflected.
    Further reflecting both children of~$\nu$, which is possible because they too have \ishape, means that in the resulting embedding the two children are reversed.
    Note that any permutation can be realized by a number of exchanges of adjacent pairs, which shows that condition~\ref{itm:decomposition-tree-invariant-condition-four} remains satisfied when permuting the children of P-nodes.
    This shows that condition~\ref{itm:decomposition-tree-invariant-condition-four} is satisfied for~$\mathcal T_i$.

    As the final step, we prove that condition~\ref{itm:decomposition-tree-invariant-condition-five} is satisfied for~$\mathcal T_i$.
    Recalling Theorem~\ref{thm:towards-correctness} and the equivalence of the skeleton-based and arc-based representations, we have that for every level-planar embedding~$\Gamma$ of~$G$ there exists a level-planar embedding~$\Gamma_p$ that is obtained from~$\Gamma$ by reordering the children of P-nodes such that it is~$\Gamma_p = \Gamma_\mathrm{ref}(\pi_1, \pi_2, \ldots, \pi_m)$ where it is~$\alpha_j = (\lambda_j, \mu_j)$ and~$\pi_j = \alpha_j$ or~$\pi_j = \bar\alpha_j$ denotes whether the embedding of~$G(\mu_i)$ should remain unchanged or be flipped, respectively.
    Now let~$\pi_j = \alpha_j$ for~$j > i$ as required by the invariant.
    We show that~$\Gamma_p$ is represented by~$\mathcal T_i$, Theorem~\ref{thm:towards-correctness} then implies that~$\Gamma$ is represented by~$\mathcal T_i$ as well.

    In the base case~$i = 0$ no arc is flipped, i.e., we have~$\Gamma_p = \Gamma_\mathrm{ref}$, which is the level-planar embedding of~$G$ represented by~$\mathcal T_0$ by definition.
    In the inductive case~$i > 0$, we distinguish two cases based on whether it is~$\pi_i = \alpha_i$ or~$\pi_i = \bar\alpha_i$.
    Define
    \begin{align*}
        \Gamma_p^1 &= \Gamma_\mathrm{ref}(\pi_1 ,\pi_2, \ldots, \pi_{i - 1}, \alpha_i, \alpha_{i + 1}, \ldots, \alpha_m) \quad\text{ and}\\
        \Gamma_p^2 &= \Gamma_\mathrm{ref}(\pi_1 ,\pi_2, \ldots, \pi_{i - 1}, \neg\alpha_i, \alpha_{i + 1}, \ldots, \alpha_m).
    \end{align*}
    Observe that~$\Gamma_p^1$ is represented by~$\mathcal T_{i - 1}$ by induction on condition~\ref{itm:decomposition-tree-invariant-condition-five}.
    Then~$\Gamma_p^1$ is also represented by~$\mathcal T_i$, which shows the claim for~$\Gamma_p = \Gamma_p^1$.
    Otherwise, it is~$\Gamma_p = \Gamma_p^2$. 
    Let~$\Pi$ denote the restriction of~$\Gamma_p^1$ to~$G(\mu_i)$.
    Then it is~$\Gamma_p^1 = \Gamma_p^1 + \Pi$ and flipping~$\alpha_i$ reflects~$\Pi$, i.e.,~$\Gamma_p^2 = \Gamma_p^1 + \bar\Pi$.
    We now distinguish two cases based on whether~$\alpha_i$ is labeled as flexible or rigid.
    If~$\alpha_i$ is labeled as flexible,~$\Gamma_p^2 = \Gamma_p$ is represented by~$\mathcal T_i$.
    Otherwise,~$\alpha_i$ is labeled as rigid.
    Recall that~$\Gamma_p^1$ is represented by~$\mathcal T_{i - 1}$ and~$\mathcal T_i$.
    Then condition~\ref{itm:decomposition-tree-invariant-condition-three} gives that~$\Gamma_p = \Gamma_p^2 = \Gamma_p^1 + \bar\Pi$ is not level planar, a contradiction.
\end{proof}

\noindent
The restricted decomposition tree~$\mathcal T_m$ represents only level-planar embeddings by Property~\ref{itm:decomposition-tree-invariant-condition-one} of Lemma~\ref{lem:decomposition-tree-invariant}.
Because no arc of~$\mathcal T_m$ is unlabeled, it also follows that all level-planar embeddings of~$G$ are represented by~$\mathcal T_m$.
Contracting all arcs labeled as rigid in~$\mathcal T_m$ gives the LP-tree for~$G$, which concludes our proof of Theorem~\ref{thm:correctness}.

\subsection{Construction in Linear Time}
\label{ssec:implementation-in-linear-time}

The algorithm described in Section~\ref{sec:decomposition} clearly has polynomial running time.
In this section, we describe an implementation of it that has linear running time.
Starting out, the preprocessing step where the apex~$t$ and the edge~$(s, t)$ is added to~$G$ is feasible in linear time.
Next, the SPQR-tree~$\mathcal T$ of this modified graph~$G$ can be computed in linear time~\cite{gm-altiost-01,ht-dagitc-73}.
Then, a level-planar embedding~$\Gamma$ of~$G$ is computed in linear time~\cite{dn-hpt-88} and all skeletons of~$\mathcal T$ are embedded accordingly.

For each node~$\mu$ of~$\mathcal T$ the height of~$G(\mu)$ needs to be known.
The heights for all nodes are computed bottom-up.
Note that the height of an edge~$e = (u, v)$ of~$G$ is~$\ell(u)$.
This means that the heights for all leaf Q-nodes can be easily determined.
In general, to determine the height for a node~$\mu$ of~$\mathcal T$, proceed as follows.
Assume the heights are known for all children.
Let~$E_\mu$ be the child virtual edges of~$\skel(\mu)$ and let~$h(e)$ denote the height of~$G(\nu)$ with~$\corr_\mu(e) = \nu$.
Then the height of~$\mu$ is~$\max \{ \{ d(e) \mid e \in E_\mu \} \cup \{ d(w) \mid w \in V(\mu) \} \}$.
Thus, the running time spent to determine the height of~$\mu$ when the heights of all its children is known is linear in the size of~$\skel(\mu)$.
Because the sum of the sizes of all skeletons of~$\mathcal T$ is linear in~$n$, all heights can be computed in linear time.

The next step is to split P-nodes.
Let~$\mu$ be a P-node.
One split at~$\mu$ requires to find the child with the greatest height.
Because~$\Gamma$ is a level-planar embedding, Lemma~\ref{lem:p-node-heights} gives that this is one of the outermost children.
By inspecting the two outermost children of~$\mu$, the child~$\nu$ with greatest height can be found, or it is found that all children of~$\mu$ have \ishape and~$\mu$ does not need to be split.
A P-node split is a constant-time operation.
Because there are no more P-node splits than nodes in~$\mathcal T$, all P-node splits are feasible in linear time.

The final step of the algorithm is to process all arcs.
For this the space around each node needs to be known.
The space around a node~$\mu$ depends on the apices of the~$\mu$-incident faces in~$\Gamma$.
Fortunately, these can be easily computed bottom-up.
Start by labeling every face~$f$ of~$\Gamma$ with its apex by walking around the cycle that bounds~$f$.
For every edge~$e$ of~$G$ the apices on both sides of~$e$ can then be looked up in~$\Gamma$.
So the incident apices are known for each Q-node of~$\mathcal T$.
Let~$\mu$ be a node of~$\mathcal T$ so that for each child~$\nu$ of~$\mu$ the apices of the~$\nu$-incident faces are known.
Then the apices of the~$\mu$-incident faces can be determined from the child virtual edges of~$\skel(\mu)$ that share a face with the parent virtual edge of~$\mu$.
The running time of this procedure is linear in the sum of sizes of all skeletons, i.e., linear in~$n$.
To process the arcs, simply walk through~$\mathcal T$ from the top down.
Compute the space around each child node~$\nu$ from the available apices of the~$\nu$-incident faces and compare it with the precomputed height of~$G(\mu)$.
Finally, contract all arcs marked as rigid, which again is feasible in overall linear time.
This proves the running time claimed in Theorem~\ref{thm:correctness}.

\section{Applications}
\label{sec:applications}

We use the LP-tree to translate efficient algorithms for constrained planarity problems to the level-planar setting.
First, we extend the partial planarity algorithm by Angelini et al.~\cite{adfjk-tppeg-15} to solve partial level planarity for biconnected single-source level graphs.
Second, we adapt this algorithm to solve constrained level planarity.
In both cases we obtain a linear-time algorithm, improving upon the best previously known running time of~$O(n^2)$, though that algorithm also works in the non-biconnected case~\cite{br-pclp-17}.
Third, we translate the simultaneous planarity algorithm due to Angelini et al.~\cite{adfpr-ttseotg-12} to the simultaneous level planarity problem when the shared graph is a biconnected single-source level graph.
Previously, no polynomial-time algorithm was known for this problem.

\subsection{Partial Level Planarity}
\label{ssec:partial-level-planarity}

Angelini et al.~define partial planarity in terms of the cyclic orders of edges around vertices (the ``edge-order definition'') as follows.
A partially embedded graph~(\textsc{Peg}) is a triple~$(G, H, \mathcal H)$ that consists of a graph~$G$ and a subgraph~$H$ of~$G$ together with a planar embedding~$\mathcal H$ of~$H$.
The task is to find an embedding~$\mathcal G$ of~$G$ that extends~$\mathcal H$ in the sense that any three edges~$e, f, g$ of~$H$ that are incident to a shared vertex~$v$ appear in the same order around~$v$ in~$\mathcal G$ as in~$\mathcal H$.
The algorithm works by representing all planar embeddings of~$G$ as an SPQR-tree~$\mathcal T$ and then determining whether there exists a planar embedding of~$G$ that extends the given partial embedding~$\mathcal H$ as follows.
Recall that~$e, f, g$ correspond to distinct Q-nodes~$\mu_e, \mu_f$ and~$\mu_g$ in~$\mathcal T$.
There is exactly one node~$\nu$ of~$\mathcal T$ that lies on all paths connecting two of these Q-nodes.
Furthermore,~$e, f, g$ belong to the expansion graphs of three distinct virtual edges~$\hat e, \hat f, \hat g$ of~$\skel(\nu)$.
The order of~$e, f$ and~$g$ in the planar embedding represented by~$\mathcal T$ is determined by the order of~$\hat e, \hat f, \hat g$ in~$\skel(\nu)$, i.e., by the embedding of~$\skel(\nu)$.
Fixing the relative order of~$e, f, g$ therefore imposes certain constraints on the embedding of~$\skel(\mu)$.
Namely, an R-node can be constrained to have exactly one of its two possible embeddings and the admissible permutations of the neighbors of a P-node can be constrained as a partial ordering.
To model the embedding~$\mathcal H$ consider for each vertex~$v$ of~$H$ each triple~$e, f, g$ of consecutive edges around~$v$ and fix their order as in~$\mathcal H$.
The algorithm collects these linearly many constraints and then checks whether they can be satisfied simultaneously.

Define partial level planarity analogously, i.e., a \emph{partially embedded level graph} is a triple~$(G, H, \mathcal H)$ of a level graph~$G$, a subgraph~$H$ of~$G$ and a level-planar embedding~$\mathcal H$ of~$H$.
Again the task is to find an embedding~$\mathcal G$ of~$G$ that extends~$\mathcal H$ in the sense that any three edges~$e, f, g$ of~$H$ that are incident to a shared vertex~$v$ appear in the same order around~$v$ in~$\mathcal G$ as in~$\mathcal H$.
This definition of partial level planarity is distinct from but~(due to Lemma~\ref{lem:lp-embedding-is-equivalence-class}~($\star$)) equivalent to the one given in~\cite{br-pclp-17}, which is a special case of constrained level planarity as presented in the next section.
LP-trees exhibit all relevant properties of SPQR-trees used by the partial planarity algorithm.
Ordered edges~$e, f, g$ of~$G$ again correspond to distinct Q-nodes of the LP-tree~$\mathcal T'$ for~$G$.
Again, there is a unique node~$\nu$ of~$\mathcal T'$ that has three virtual edges~$\hat e, \hat f, \hat g$ that determine the order of~$e, f, g$ in the level-planar drawing represented by~$\mathcal T'$.
Finally, in LP-trees just like in SPQR-trees, R-nodes have exactly two possible embeddings and the virtual edges of P-nodes can be arbitrarily permuted.
Using the LP-tree as a drop-in replacement for the SPQR-tree in the partial planarity algorithm due to Angelini et al.~gives the following, improving upon the previously known best algorithm with~$O(n^2)$ running time (although that algorithm also works for the non-biconnected case~\cite{br-pclp-17}).

\begin{theorem}
    \label{thm:partial-level-planarity}
    Partial level planarity can be solved in linear running time for biconnected single-source level graphs.
\end{theorem}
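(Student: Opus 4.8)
The plan is to show that the LP-tree satisfies exactly those structural and algorithmic properties of the SPQR-tree on which the partial planarity algorithm of Angelini et al.\ relies, so that substituting the LP-tree $\mathcal T'$ of $G$ for its SPQR-tree yields a correct linear-time algorithm for partial level planarity without altering the algorithm's logic.

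First I would set up the reduction to local ordering constraints exactly as in the planar case. Each of the linearly many triples $e, f, g$ of edges consecutive around a vertex $v$ of $H$ in $\mathcal H$ yields one constraint. Since each edge of $G$ corresponds to a unique Q-node of $\mathcal T'$, and since in any decomposition tree there is a unique node $\nu$ lying on all three paths joining the Q-nodes of $e, f, g$, the cyclic order of $e, f, g$ in every level-planar embedding represented by $\mathcal T'$ is governed solely by the embedding of $\skel(\nu)$. This tree-structural fact is independent of the particular decomposition tree used and so transfers verbatim to LP-trees.

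Second I would invoke Theorem~\ref{thm:correctness}, which guarantees that $\mathcal T'$ represents exactly the level-planar embeddings of $G$ and that its nodes offer the same restricted embedding choices as SPQR-tree nodes: R-nodes admit precisely their reference embedding and its reflection, while the children of P-nodes may be permuted arbitrarily. Consequently, fixing the order of a triple at node $\nu$ imposes a constraint of the same form as in the planar setting---either selecting one of an R-node's two embeddings, or restricting the admissible permutations of a P-node's children by a partial order. The algorithm therefore collects the same type of linearly many constraints and tests their simultaneous satisfiability using the identical machinery, and correctness of the edge-order formulation as a faithful notion of level-planar embedding extension follows from Lemma~\ref{lem:lp-embedding-is-equivalence-class}.

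Finally I would account for the running time: by Theorem~\ref{thm:correctness} the LP-tree is built in linear time, there are only linearly many constraints each localised at a single node, and so constraint collection and the satisfiability check run in linear total time as in Angelini et al.\ I do not expect a genuine obstacle here, since the substantial work lay in constructing the LP-tree and proving in Section~\ref{sec:decomposition} that its P- and R-nodes behave like their SPQR counterparts. The one point requiring care is to confirm that the Q-node-to-edge correspondence and the unique-meeting-node property survive the P-node splits and arc contractions used to form the LP-tree, so that every triple still maps to a well-defined node $\nu$ whose embedding choices drive the order of $e, f, g$.
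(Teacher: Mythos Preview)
Your proposal is correct and follows essentially the same approach as the paper: both argue that the LP-tree can be used as a drop-in replacement for the SPQR-tree in the algorithm of Angelini et al., since it is a decomposition tree with the same Q-node/edge correspondence, the same unique-meeting-node property for triples, and the same embedding choices at P- and R-nodes, with correctness of the edge-order formulation supplied by Lemma~\ref{lem:lp-embedding-is-equivalence-class} and linear construction time by Theorem~\ref{thm:correctness}. Your write-up is in fact slightly more explicit than the paper's, particularly in flagging that the Q-node correspondence and meeting-node property must survive the P-node splits and arc contractions; this is a fair point of care but, as you anticipate, not a genuine obstacle since these operations preserve the decomposition-tree structure.
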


\noindent
Angelini et al.~extend their algorithm to the connected case~\cite{adfjk-tppeg-15}.
This requires significant additional effort and the use of another data structure, called the enriched block-cut tree, that manages the biconnected components of a graph in a tree.
Some of the techniques described in this paper, in particular our notion of demands, may be helpful in extending our algorithm to the connected single-source case.
Consider a connected single-source graph~$G$.
All biconnected components of~$G$ have a single source and the LP-tree can be used to represent their level-planar embeddings.
However, a vertex~$v$ of some biconnected component~$H$ of~$G$ may be a cutvertex in~$G$ and can dominate vertices that do not belong to~$H$.
Depending on the space around~$v$ and the levels on which these vertices lie this may restrict the admissible level-planar embeddings of~$H$.
Let~$X(v)$ denote the set of vertices dominated by~$v$ that do not belong to~$H$.
Set the demand of~$v$ to~$d(v) = d(X(v))$.
Computing the LP-tree with these demands ensures that there is enough space around each cutvertex~$v$ to embed all components connected at~$v$.
The remaining choices are into which faces of~$H$ incident to~$v$ such components can be embedded and possibly nesting biconnected components.
These choices are largely independent for different components and only depend on the available space in each incident face.
This information is known from the LP-tree computation.
In this way it may be possible to extend the steps for handling non-biconnected graphs due to Angelini et al.~to the level planar setting.

\subsection{Constrained Level Planarity}
\label{ssec:constrained-level-planarity}

A \emph{constrained level graph}~(\textsc{Clg}) is a tuple~$(G, \{\prec'_1, \prec'_2, \ldots, \prec'_k \})$ that consists of a~$k$-level graph~$G$ and partial orders~$\prec'_i$ of~$V_i$ for~$i = 1, 2, \ldots, k$ (the ``vertex-order definition'')~\cite{br-pclp-17}.
The task is to find a drawing of~$G$, i.e., total orders~$\prec_i$ of~$V_i$ that extend~$\prec'_i$ in the sense that for any two vertices~$u, v \in V_i$ with~$u \prec'_i v$ it is~$u \prec_i v$.

\begin{theorem}
    \label{thm:constrained-level-planarity}
    Constrained level planarity can be solved in linear running time for biconnected single-source level graphs.
\end{theorem}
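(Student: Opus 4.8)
The plan is to follow the same drop-in strategy used for partial level planarity in Theorem~\ref{thm:partial-level-planarity}, reducing constrained level planarity to an embedding-constraint problem on the LP-tree. First I would reconcile the two definitions of the constraint input. The \textsc{Clg} is given in the vertex-order definition via partial orders~$\prec'_i$ on each level~$V_i$, whereas the partial planarity machinery operates in the edge-order definition via triples of edges around a vertex. By Lemma~\ref{lem:lp-embedding-is-equivalence-class}, a level-planar drawing of a single-source level graph is determined by its combinatorial embedding, and the proof of that lemma gives the precise dictionary: the left-to-right order~$u \prec_i v$ of two vertices on a level corresponds to the cyclic order of certain edges (the first edges of disjoint directed paths from a common ancestor~$w$, together with an incoming edge of~$w$ or the edge~$(s,t)$) around~$w$. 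I would use this correspondence to translate each constraint~$u \prec'_i v$ into a constraint on the relative order of edges around the witnessing ancestor vertex~$w$, which is exactly the kind of constraint the LP-tree can express.

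Next I would show that each such edge-order constraint localizes to a single node of the LP-tree, just as in the partial case. Given a constraint on the order of edges~$e, f, g$ around a vertex~$w$, there is a unique node~$\nu$ of the LP-tree~$\mathcal T'$ whose skeleton contains three distinct virtual edges~$\hat e, \hat f, \hat g$ whose order determines the order of~$e, f, g$; this is the node on which the three root-to-Q-node paths meet. Since Theorem~\ref{thm:correctness} guarantees that LP-tree R-nodes admit exactly two embeddings (a reference embedding and its reflection) and P-node children may be permuted arbitrarily, a constraint on~$\hat e, \hat f, \hat g$ at~$\nu$ either fixes the reflection bit of an R-node or restricts the admissible permutations of a P-node to those extending a partial order. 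I would then collect all such constraints over all input pairs and solve them simultaneously with the same constraint-satisfaction routine Angelini et al.~use: intersecting the R-node reflection choices and checking that the P-node partial orders are jointly extendible to a linear order.

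The main obstacle I expect is the translation from vertex orders to edge orders, since a single constraint~$u \prec'_i v$ need not refer to edges sharing the vertex of any fixed node in a clean way. The witnessing ancestor~$w$ depends on the pair~$u,v$, and distinct constraints may route through different ancestors, so I would need to argue that the number of generated edge-order constraints stays linear and that each can be found in total linear time. Concretely, I would reduce each level order~$\prec'_i$ to constraints between \emph{consecutive} vertices in a linear extension and charge each such constraint to a witnessing ancestor found via a bottom-up traversal of~$\mathcal T'$, analogous to the apex and height computations in Section~\ref{ssec:implementation-in-linear-time}. Once the constraints are in edge-order form and localized to LP-tree nodes, the remainder is identical to the partial planarity argument, and the linear-time bound follows from Theorem~\ref{thm:correctness} together with the linear number of constraints.
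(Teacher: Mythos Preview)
Your high-level plan coincides with the paper's: translate each vertex-order constraint~$u \prec'_i v$ into an edge-order constraint around a common ancestor~$w$ via Lemma~\ref{lem:lp-embedding-is-equivalence-class}, localize that constraint to a single node of the LP-tree, and then reuse the partial-planarity machinery. That part is fine.

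The gap is in your linear-time implementation sketch. First, the witnessing ancestor~$w$ is a vertex of~$G$, not a node of the LP-tree, so a ``bottom-up traversal of~$\mathcal T'$ analogous to the apex and height computations'' does not find it; the paper instead takes a depth-first-search tree~$\mathcal D$ of~$G$ and computes~$w$ as the lowest common ancestor of~$u$ and~$v$ in~$\mathcal D$ using Harel--Tarjan. Second, knowing~$w$ is not enough: you also need the specific tree edges~$e = (w,u')$ and~$f = (w,v')$ that are the first edges on the paths toward~$u$ and~$v$. The paper obtains these in overall linear time by processing~$\mathcal D$ bottom-up with an offline union--find (Gabow--Tarjan), merging each vertex into its parent's set after it is processed; your proposal does not supply a mechanism for this step. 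Third, your idea of ``reducing each~$\prec'_i$ to constraints between consecutive vertices in a linear extension'' is unnecessary and not obviously sound: consecutive pairs in a linear extension of~$\prec'_i$ need not be comparable in~$\prec'_i$, so you would be imposing constraints that are not part of the input. The paper simply processes each given pair~$u \prec'_i v$ directly, relying on the input already consisting of linearly many such pairs.
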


\begin{proof}
Tanslate the given vertex-order constraints into edge-order constraints.
This translation is justified by Lemma~\ref{lem:lp-embedding-is-equivalence-class}.
We now show that all vertex-order constraints can be translated in linear time.
For any pair~$u, v$ with~$u \prec'_i v$ we start by finding a vertex~$w$ so that there are disjoint paths~$p_u$ and~$p_v$ from~$w$ to~$u$ and~$v$.
This can be achieved by using the algorithm of Harel and Tarjan on a depth-first-search tree~$\mathcal D$ of~$G$~\cite{ht-faffnca-84} in linear time.
Mark~$w$ with the pair~$u, v$ for the next step.
Then, we find the edges~$e$ and~$f$ of~$p_u$ and~$p_v$ incident to~$w$, respectively.
To this end, we proceed similarly to a technique described by Bl\"asius et al.~\cite{bkr-seeorpc-18}.
At the beginning, every vertex of~$G$ belongs to its own singleton set.
Proceed to process the vertices of~$G$ bottom-up in~$\mathcal D$, i.e., starting from the vertices on the greatest level.
When encountering a vertex~$w$ marked with a pair~$u, v$, find the representatives of~$u$ and~$v$, denoted by~$u'$ and~$v'$, respectively.
Observe that it is~$e = (w, u')$ and~$f = (w, v')$, and that both~$e$ and~$f$ are tree edges of~$\mathcal D$.
Then unify the sets of all of its direct descendants in~$\mathcal D$ and let~$w$ be the representative of the resulting union.
Because all union operations are known in advance we can use the linear-time union-find algorithm of Gabow and Tarjan~\cite{gt-altafascodsu-85}.
Finally, pick some incoming edge around~$w$ as~$g$, or the edge~$(s, t)$ if~$w = s$.
In this way, we translate the constraint of the form~$u \prec'_i v$ to a constraint on the order of the edges~$e, f$ and~$g$ around~$w$.
Apply this translation for each constraint in the partial orders~$\prec'_i$.

In a similar fashion we can find the node~$\nu$ of the LP-tree~$\mathcal T$ and the three virtual edges~$\hat e, \hat f$ and~$\hat g$ of~$\skel(\nu)$ so that the relative position of~$\hat e, \hat f$ and~$\hat g$ in the embedding of~$\skel(\nu)$ determines the relative position of~$e, f$ and~$g$ in the embedding represented by~$\mathcal T$.
We can the use a similar technique as the one described for partial level planarity.
\end{proof}

\subsection{Simultaneous Level Planarity}
\label{ssec:simultaneous-level-planarity}

We translate the simultaneous planarity algorithm of Angelini et al.~\cite{adfpr-ttseotg-12} to solve simultaneous level planarity for biconnected single-source graphs.
They define simultaneous planarity as follows.
Let~$G_1 = (V, E_1)$ and~$G_2 = (V, E_2)$ be two graphs with the same vertices.
The \emph{inclusive} edges~$E_1 \cap E_2$ together with~$V$ make up the intersection graph~$G_{1 \cap 2}$, or simply~$G$ for short.
All other edges are \emph{exclusive}.
The graphs~$G_1$ and~$G_2$ admit \emph{simultaneous} embeddings~$\mathcal E_1, \mathcal E_2$ if the relative order of any three distinct inclusive edges~$e, f$ and~$g$ with a shared endpoint is identical in~$\mathcal E_1$ and~$\mathcal E_2$.
The algorithm of Angelini et al.~works by building the SPQR-tree for the shared graph~$G$ and then expressing the constraints imposed on~$G$ by the exclusive edges as a \textsc{2-Sat} instance~$S$ that is satisfiable iff~$G_1$ and~$G_2$ admit a simultaneous embedding.
We give a very brief overview of the \textsc{2-Sat} constraints in the planar setting.
\begin{figure}[t]
    \centering
    \includegraphics{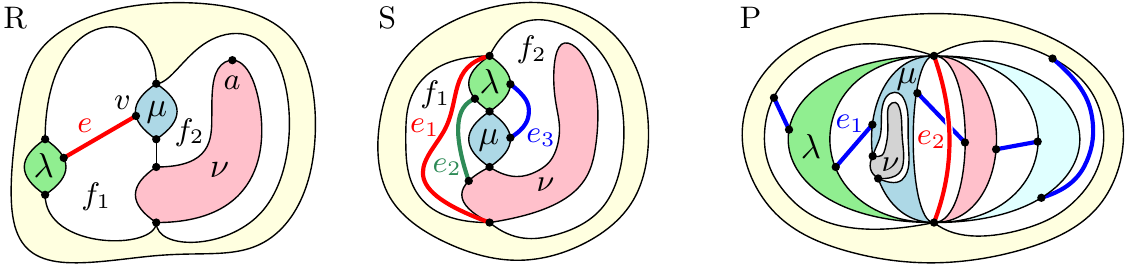}
    \caption{
        In the R-node,~$e$ fixes the relative embeddings of~$G(\lambda)$ and~$G(\mu)$.
        In the level-planar setting,~$e$ also fixes the embedding of~$G(\nu)$.
        In the S-node,~$e_2$ and~$e_3$ fix the relative embeddings of~$G(\lambda), G(\nu)$ and~$G(\lambda), G(\mu)$, respectively.
        In the level-planar setting,~$e_1$ also fixes the embedding of~$G(\nu)$.
        In the P-node,~$e_1$ fixes the relative embeddings of~$G(\lambda)$ and~$G(\mu)$.
        In the level-planar setting,~$e_1$ also fixes the embedding of~$G(\nu)$.
    }
    \label{fig:simultaneous-constraints}
\end{figure}
In an R-node, an exclusive edge~$e$ has to be embedded into a unique face.
This potentially restricts the embedding of the expansion graphs~$G(\lambda), G(\mu)$ that contain the endpoints of~$e$, i.e., the embedding of~$G(\lambda)$ and~$G(\mu)$ is fixed with respect to the embedding of the R-node.
Add a variable~$x_\mu$ to~$S$ for every node of~$\mathcal T$ with the semantics that~$x_\mu$ is true if~$\skel(\mu)$ has its reference embedding~$\Gamma_\mu$, and false if the embedding of~$\skel(\mu)$ is the reflection of~$\Gamma_\mu$.
The restriction imposed by~$e$ on~$G(\lambda)$ and~$G(\mu)$ can then be modeled as a \textsc{2-Sat} constraint on the variables~$x_\lambda$ and~$x_\mu$.
For example, in the R-node shown in Fig.~\ref{fig:simultaneous-constraints} on the left, the internal edge~$e$ must be embedded into face~$f_1$, which fixes the relative embeddings of~$G(\lambda)$ and~$G(\mu)$.
In an S-node, an exclusive edge~$e$ may be embedded into one of the two candidate faces~$f_1, f_2$ around the node.
The edge~$e$ can conflict with another exclusive edge~$e'$ of the S-node, meaning that~$e$ and~$e'$ cannot be embedded in the same face.
This is modeled by introducing for every exclusive edge~$e$ and candidate face~$f$ the variable~$x^f_e$ with the semantics that~$x^f_e$ is true iff~$e$ is embedded into~$f$.
The previously mentioned conflict can then be resolved by adding the constraints~$x^{f_1}_e \lor x^{f_2}_e$,~$x^{f_1}_{e'} \lor x^{f_2}_{e'}$ and~$x^{f_1}_e \neq x^{f_1}_{e'}$ to~$S$.
Additionally, an exclusive edge~$e$ whose endpoints lie in different expansion graphs can restrict their respective embeddings.
For example, in the S-node shown in Fig.~\ref{fig:simultaneous-constraints} in the middle, the edges~$e_2$ and~$e_3$ may not be embedded into the same face.
And~$e_2$ and~$e_3$ fix the embeddings of~$G(\lambda)$ and~$G(\nu)$ and of~$G(\lambda)$ and~$G(\mu)$, respectively.
This would be modeled as~$x_\lambda = x_\nu$ and~$x_\lambda = x_\mu$ in~$S$.
In a P-node, an exclusive edge can restrict the embeddings of expansion graphs just like in R-nodes.
Additionally, exclusive edges between the poles of a P-node can always be embedded unless all virtual edges are forced to be adjacent by internal edges.
For example, in the P-node shown in Fig.~\ref{fig:simultaneous-constraints} on the right,~$e_1$ fixes the relative embeddings of~$G(\lambda)$ and~$G(\mu)$.
And~$e_2$ can be embedded iff one of the blue edges does not exist.

Adapt the algorithm to the level-planar setting.
First, replace the SPQR-tree with the LP-tree~$\mathcal T$.
The satisfying truth assignments of~$S$ then correspond to simultaneous planar embeddings~$\mathcal E_1, \mathcal E_2$ of~$G_1, G_2$, so that their shared embedding~$\mathcal E$ of~$G$ is level planar.
However, due to the presence of exclusive edges,~$\mathcal E_1$ and~$\mathcal E_2$ are not necessarily level planar.
To make sure that~$\mathcal E_1$ and~$\mathcal E_2$ are level planar, we add more constraints to~$S$.
Consider adding an exclusive edge~$e$ into a face~$f$.
This splits~$f$ into two faces~$f', f''$.
The apex of at least one face, say~$f''$, remains unchanged.
As a consequence, the space around any virtual edge incident to~$f''$ remains unchanged as well.
But the apex of~$f'$ can change, namely, the apex of~$f'$ is an endpoint of~$e$.
Then the space around the virtual edges incident to~$f'$ can decrease.
This reduces the space around the virtual edge associated with~$\nu$.
In the same way as described in Section~\ref{ssec:arc-processing}, this restricts some arcs in~$\mathcal T$.
This can be described as an implication on the variables~$x^f_e$ and~$x_\nu$.
For an example, see Fig.~\ref{fig:simultaneous-constraints}.
In the R-node, adding the edge~$e$ with endpoint~$v$ into~$f_1$ creates a new face~$f_1'$ with apex~$v$.
This forces~$G(\nu)$ to be embedded so that its apex~$a$ is embedded into face~$f_2$.
Similarly, in the S-node and in the P-node, adding the edge~$e_1$ restricts~$G(\nu)$.
We collect all these additional implications of embedding~$e$ into~$f$ and add them to the 2-\textsc{Sat} instance~$S$.
Each exclusive edge leads to a constant number of \textsc{2-Sat} implications.
To find each such implication~$O(n)$ time is needed in the worst case.
Because there are at most~$O(n)$ exclusive edges this gives quadratic running time overall.
Clearly, all implications must be satisfied for~$\mathcal E_1$ and~$\mathcal E_2$ to be level planar.
On the other hand, suppose that one of~$\mathcal E_1$ or~$\mathcal E_2$, say~$\mathcal E_1$, is not level planar.
Because the restriction of~$\mathcal E_1$ to~$G$ is level planar due to the LP-tree and planar due to the algorithm by Angelini et al., there must be a crossing involving an exclusive edge~$e$ of~$G_1$.
This contradicts the fact that we have respected all necessary implications of embedding~$e$.
We obtain Theorem~\ref{thm:simultaneous-level-planarity}.

\begin{theorem}
    \label{thm:simultaneous-level-planarity}
    Simultaneous level planarity can be solved in quadratic time for two graphs whose intersection is a biconnected single-source level graph.
\end{theorem}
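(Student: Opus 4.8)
The plan is to adapt the simultaneous planarity algorithm of Angelini et al.~by replacing the SPQR-tree of the shared graph~$G$ with the LP-tree~$\mathcal T$, and then augmenting the resulting \textsc{2-Sat} instance~$S$ with additional level-planarity constraints. First I would build the LP-tree~$\mathcal T$ for the biconnected single-source shared graph~$G$ in linear time via Theorem~\ref{thm:correctness}, and set up the standard \textsc{2-Sat} variables: a variable~$x_\mu$ for each node~$\mu$ of~$\mathcal T$ encoding whether~$\skel(\mu)$ takes its reference embedding~$\Gamma_\mu$ or its reflection, together with the face-selection variables~$x^f_e$ for exclusive edges~$e$ and candidate faces~$f$. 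Because LP-trees exhibit exactly the structural properties of SPQR-trees used by the original algorithm (R-nodes have two embeddings up to reflection, P-node children permute freely, each ordered triple of inclusive edges is resolved at a unique node~$\nu$), the constraints expressing that the shared embedding~$\mathcal E$ of~$G$ agrees across~$\mathcal E_1$ and~$\mathcal E_2$ carry over verbatim. At this stage the satisfying assignments of~$S$ correspond to simultaneous planar embeddings whose shared embedding~$\mathcal E$ of~$G$ is \emph{level} planar, guaranteed by the LP-tree.

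The core new ingredient is ensuring that each individual embedding~$\mathcal E_1,\mathcal E_2$ is level planar, not merely planar. The idea is that inserting an exclusive edge~$e$ into a face~$f$ of~$\mathcal E$ splits~$f$ into~$f',f''$, and while one of these retains the apex of~$f$, the other face~$f'$ may acquire a new, lower apex (an endpoint of~$e$), decreasing the space around any virtual edge incident to~$f'$. By the space-versus-height dichotomy established in Lemma~\ref{lem:decomposition-tree-invariant} (condition~\ref{itm:decomposition-tree-invariant-condition-three}), a decrease in space around a node~$\nu$ can force the previously free arc above~$\nu$ to become rigid, i.e.\ it pins down the orientation variable~$x_\nu$. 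I would therefore compute, for each exclusive edge~$e$ and each candidate face~$f$ into which~$e$ could be embedded, the precise set of orientation variables~$x_\nu$ that become forced, and record these as \textsc{2-Sat} implications of the form~$x^f_e \Rightarrow (x_\nu = \cdot)$. The space information needed to decide exactly which arcs become rigid is available from the LP-tree computation of Section~\ref{ssec:arc-processing}, where the space around each node is already calculated.

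For the running-time bound I would observe that each exclusive edge contributes only a constant number of new implications, since its insertion affects a bounded number of candidate faces and each face touches a bounded number of nodes whose orientation is pinned. Locating the affected node~$\nu$ and reading off the forced orientation, however, costs~$O(n)$ in the worst case by tracing through~$\mathcal T$; with at most~$O(n)$ exclusive edges this yields the quadratic total running time asserted in the theorem. The correctness argument is then a two-way implication: clearly every listed implication is \emph{necessary} for~$\mathcal E_1,\mathcal E_2$ to be level planar, since violating it leaves a vertex as the apex of all its incident faces, contradicting Lemma~\ref{lem:level-planarity-characterisation}. For sufficiency I would argue by contradiction: if, say,~$\mathcal E_1$ fails to be level planar, then since its restriction to~$G$ is level planar (by the LP-tree) and its global embedding is planar (by the Angelini et al.~machinery), the only possible obstruction is a crossing or apex-violation caused by an exclusive edge~$e$, which directly contradicts the fact that the implications generated by~$e$ were satisfied.

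The main obstacle I anticipate is the second paragraph: correctly and completely characterizing which orientation variables~$x_\nu$ become forced when~$e$ is routed through~$f$, and proving that these constant-per-edge implications are simultaneously \emph{sound} (they never forbid a genuinely level-planar solution) and \emph{complete} (they forbid every non-level-planar one). The subtlety is that the apex change of~$f'$ can cascade: a single inserted edge may reduce the space around several nested virtual edges along a path in~$\mathcal T$, and one must argue that the space/height comparison of Lemma~\ref{lem:decomposition-tree-invariant} correctly identifies \emph{all} arcs that flip from flexible to rigid, while confirming that the number of genuinely newly-constrained orientation variables per exclusive edge remains bounded so that the \textsc{2-Sat} instance stays linear in size.
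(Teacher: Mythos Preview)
Your proposal is correct and follows essentially the same approach as the paper: replace the SPQR-tree by the LP-tree in the Angelini et al.\ framework, then augment the \textsc{2-Sat} instance with implications $x^f_e \Rightarrow (x_\nu = \cdot)$ capturing how an exclusive edge~$e$ routed into face~$f$ lowers the apex of one resulting subface and thereby forces the orientation of certain nodes~$\nu$, with the same $O(n)$-per-edge, $O(n^2)$-total running-time analysis and the same necessity/sufficiency correctness argument. The concern you flag in your final paragraph about cascading space reductions is legitimate, and the paper's own discussion is no more rigorous on this point than your sketch.
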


\noindent
In the non-biconnected setting Angelini et al.~solve the case when the intersection graph is a star.
Haeupler et al.~describe an algorithm for simultaneous planarity that does not use SPQR-trees, but they also require biconnectivity~\cite{hjl-tspwtcgi2c}.
The complexity of the general (connected) case remains open.

\section{Conclusion}
\label{sec:conclusion}

The majority of constrained embedding algorithms for planar graphs rely on two features of the SPQR-tree: they are decomposition trees and the embedding choices consist of arbitrarily permuting parallel edges between two poles or choosing the flip of of a skeleton whose embedding is unique up to reflection.
We have developed the LP-tree, an SPQR-tree-like embedding representation that has both of these features.
SPQR-tree-based algorithms can then usually be executed on LP-trees without any modification.
The necessity for mostly minor modifications only stems from the fact that in many cases the level-planar version of a problem imposes additional restrictions on the embedding compared to the original planar version.
Our LP-tree thus allows to leverage a large body of literature on constrained embedding problems and to transfer it to the level-planar setting.
In particular, we have used it to obtain linear-time algorithms for partial and constrained level planarity in the biconnected case, which improves upon the previous best known running time of~$O(n^2)$.
Moreover, we have presented an efficient algorithm for the simultaneous level planarity problem.
Previously, no polynomial-time algorithm was known for this problem.
Finally, we have argued that an SPQR-tree-like embedding representation for level-planar graphs with multiple sources does not substantially help in solving the partial and constrained level planarity problems, is not efficiently computable, or does not exist.

\bibliographystyle{plainurl}
\bibliography{references}

\end{document}